\DeclareFontFamily{OMX}{MnSymbolE}{}
\DeclareFontShape{OMX}{MnSymbolE}{m}{n}{
    <-6>  MnSymbolE5
   <6-7>  MnSymbolE6
   <7-8>  MnSymbolE7
   <8-9>  MnSymbolE8
   <9-10> MnSymbolE9
  <10-12> MnSymbolE10
  <12->   MnSymbolE12}{}
\DeclareSymbolFont{mnlargesymbols}{OMX}{MnSymbolE}{m}{n}
\DeclareMathDelimiter{\llangle}{\mathopen}{mnlargesymbols}{'164}{mnlargesymbols}{'164}
\DeclareMathDelimiter{\rrangle}{\mathclose}{mnlargesymbols}{'171}{mnlargesymbols}{'171}
\theoremstyle{plain}
\newtheorem{lemm}{Lemma}
\newtheorem{theo}{Theorem}
\newtheorem{prop}{Proposition}
\newtheorem{coro}{Corollary}
\theoremstyle{definition}
\newtheorem{remark}{Remark}
\newtheorem*{remark*}{Remark}
\newtheorem{prot}{Protocol}
\newcommand{\cD}{\mathcal{D}}
\newcommand{\cE}{\mathcal{E}}
\newcommand{\cP}{\mathcal{P}}
\newcommand{\cQ}{\mathcal{Q}}
\newcommand*{\QEDA}{\hfill\ensuremath{\blacksquare}}%
\newcommand{\ii}{\iota}
\newcommand{\pureset}[1]{\mathrm{P}(#1)}
\newcommand{\U}[1]{\mathrm{U}(#1)}
\newcommand{\Rr}{\sR}
\newcommand{\Ss}{\sS}
\DeclareMathOperator{\Tr}{Tr}
\DeclareMathOperator{\id}{id}
\DeclarePairedDelimiter\px{\{}{\}}
\DeclarePairedDelimiter\paren{(}{)}
\newcommand\vN{\mathsf{n}}
\newcommand\vF{\mathsf{f}}
\newcommand\vM{\mathsf{m}}
\newcommand\vR{\mathsf{r}}
\newcommand\CC{\mathrm{CC}(\Phi)}
\newcommand\sA{\mathsf{A}}
\newcommand\sB{\mathsf{B}}
\newcommand\sC{\mathsf{C}}
\newcommand\sI{\mathsf{I}}
\newcommand\sM{\mathsf{M}}
\newcommand\sR{\mathsf{R}}
\newcommand\sS{\mathsf{S}}
\newcommand\sU{\mathsf{U}}
\newcommand\sX{\mathsf{X}}
\newcommand\sY{\mathsf{Y}}
\newcommand\sZ{\mathsf{Z}}
\tikzstyle{block} = [rectangle, draw, 
\tikzstyle{rblock} = [rectangle, draw, 
\tikzstyle{circ} = [circle, draw, inner sep=0em, minimum width=2em,
\tikzstyle{circ2} = [circle, draw, inner sep=0em, minimum width=2.2em,
\tikzstyle{rec} = [rectangle, draw]
\tikzstyle{line} = [draw, -latex]
\tikzstyle{imp-line} = [draw, -implies, double equal sign distance]
\tikzset{snake arrow/.style=
{
decorate,
decoration={snake,amplitude=.4mm,segment length=2mm,pre length=1mm, post length=1mm}}
}
\tikzset{snake arrow0/.style=
{
decorate,
decoration={snake,amplitude=.4mm,segment length=2mm,post length=1mm}}
}
\begin{document}

\title{Quantum Private Information Retrieval for Quantum Messages}

\author{%
 \IEEEauthorblockN{Seunghoan Song$^{a}$
 and Masahito Hayashi$^{b,a}$}
  \IEEEauthorblockA{$~^{a}$Graduate School of Mathematics, Nagoya University \\
$^b$Shenzhen Institute for Quantum Science and Engineering, Southern University of Science and Technology\\
    Email: {m17021a@math.nagoya-u.ac.jp \& hayashi@sustech.edu.cn} }
}

\maketitle

\begin{abstract}
Quantum private information retrieval (QPIR) for quantum messages
	is the protocol in which a user retrieves one of the multiple quantum states from one or multiple servers without revealing which state is retrieved.
We consider QPIR in two different settings:
	the blind setting, in which the servers contain one copy of the message states,
	and the visible setting, in which the servers contain the description of the message states.
One trivial solution in both settings is downloading all states from the servers and the main goal of this paper is to find more efficient QPIR protocols.
First, we prove that the trivial solution is optimal for one-server QPIR in the blind setting.
	In one-round protocols, the same optimality holds even in the visible setting.
On the other hand, when the user and the server share entanglement,
	we prove that there exists an efficient one-server QPIR protocol in the blind setting.
{Furthermore, 
	in the visible setting, 
	we prove that it is possible to construct {\em symmetric} QPIR protocols in which the user obtains no information of the non-targeted messages.
We construct three two-server symmetric QPIR protocols for pure states.
Note that symmetric classical PIR is impossible without shared randomness unknown to the user.}
\end{abstract}


\section{Introduction}

\subsection{Backgrounds: Private information retrieval (PIR) for classical messages}

\subsubsection{Private information retrieval (PIR)}

Private information retrieval is a method to retrieve a message from a server without revealing which message is retrieved.
PIR has a simple solution of downloading all messages
	and this trivial solution is proved to be optimal \cite{CGKS98}, i.e.,
	the optimal communication complexity is $O(\vM)$, where $\vM$ is the total number of bits in the messages.
%
To improve the communication efficiency of PIR, 
	there have been mainly two approaches: PIR with computational assumptions \cite{CMS99, Lipmaa10} and PIR with multiple servers \cite{BS03,Yekanin07,DGH12}.
Recently, information-theoretic aspects of PIR has been extensively studied \cite
{CHY15,SJ17, SJ17-2, SJ18, BU18, FHGHK17, KLRG17, LKRG18, TSC18-2, WS17,  Tandon17,  BU19,L19, KGHERS19, TGKFH19}.
In this paper, we only focus on the one-server and multi-server PIRs without computational assumptions.

Symmetric PIR (SPIR) is PIR in which the user obtains only the targeted message but no information of the non-targeted messages.
In the one-server case, SPIR is also called oblivious transfer, which is proven to be impossible even in the quantum case \cite{Lo97}. 
In the multi-server case, Gertner et al. \cite{GIKM00} proved that SPIR is impossible, but it becomes possible if the servers share randomness.

\subsubsection{One-server quantum PIR for classical messages}

PIR has also been studied when quantum communication is allowed between the user and the server(s) \cite{KdW03,KdW04, Ole11,BB15, LeG12,KLLGR16, ABCGLS19, SH19, SH19-2, SH20, AHPH20,KL20}.
Hereinafter, we denote quantum PIR for classical messages as C-QPIR. 
Interestingly, when the server is honest, i.e., the server does not deviate from the protocol,
	Le Gall \cite{LeG12} proposed a C-QPIR protocol with communication complexity $O(\sqrt{\vM})$, and Kerenidis et al. \cite{KLLGR16} improved this result by $O(\mathrm{poly}\log\vM)$, 
	where the communication in the quantum case is the total number of communicated qubits.
However, when the server deviates from the protocol as far as its malicious operations are not revealed to the user, which is called {\em specious adversary},
	Baumeler and Broadbent \cite{BB15} proved that the communication complexity is at least $O(\vM)$, i.e., the trivial solution of downloading all messages is optimal also for this case.
When prior entanglement is allowed between the user and the server,
	the communication complexity is improved by $O(\log \vM)$
		for the honest server model \cite{KLLGR16}, 
	but the communication complexity is also lower bounded by $O(\vM)$ for the specious attack \cite{ABCGLS19}.

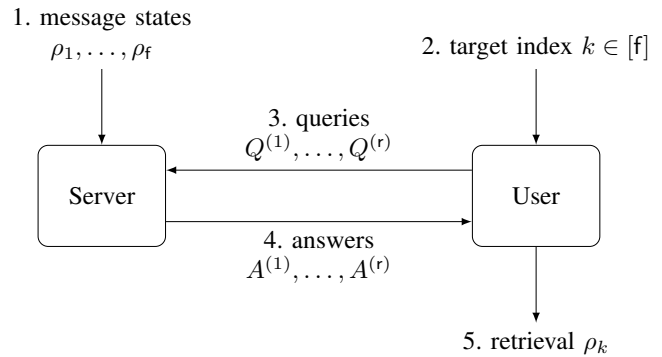
\begin{figure}[t]
\begin{center}
        \resizebox {1\linewidth} {!} {
\begin{tikzpicture}[scale=0.5, node distance = 3.3cm, every text node part/.style={align=center}, auto]
	\node [rblock] (serv) {Server};
	\node [rblock, right=12em of serv] (user) {User};
	\node [above=3em of serv] (1) {1. message states\\$\rho_1, \ldots , \rho_{\vF}$};
	\node [above=3em of user] (2) {2. target index $k\in[\vF]$};
	\node [below=3em of user] (5) {5. retrieval $\rho_k$};

	\path [line] (serv.22 -| user.west) --node[above]{3. queries\\$Q^{(1)},\ldots,Q^{(\vR)} $} (serv.22);
	\path [line] (serv.-22) --node[below]{4. answers\\$A^{(1)}, \ldots,A^{(\vR)}$} (serv.-22 -| user.west);
	\path [line] (1) -- (serv.north);
	\path [line] (2) -- (user.north);
	\path [line] (user.south) -| (5);
	
\end{tikzpicture}
}
\caption{One-server QPIR protocol with quantum messages. At round $i$, the user uploads a query $Q^{(i)}$ and downloads an answer $A^{(i)}$.}
\label{fig:one-server}
\end{center}
\end{figure}

\subsubsection{Multi-server quantum PIR for classical messages}


For multi-server C-QPIR, 
	Kerenidis and de Wolf \cite{KdW04}
		proposed a symmetric C-QPIR protocol without shared randomness.
Song and Hayashi \cite{SH19}
	derived symmetric and non-symmetric C-QPIR capacity,
	which is defined similar to its classical counterpart \cite{SJ17,SJ17-2} as the optimal communication efficiency for arbitrary-long classical messages,
	and proved that symmetric C-QPIR can be constructed without any communication loss if prior-entanglement among servers is allowed.
	They \cite{SH19-2,SH20} 
		and Allaix et al. \cite{AHPH20} 
		also considered C-QPIR with colluding servers in which 
		secrecy of the protocol is preserved even if some servers may communicate and collude.
Kon and Lim \cite{KL20} constructed a symmetric C-QPIR protocol with quantum-key distribution.


\begin{table}[t]
\begin{center}
\caption{Definition of symbols} \label{tab:symbols}
\begin{tabular}{|c|c|c|c|c|}
\hline
Symbol &	Definition	\\
\hline
$\vM$	&	Total size of messages (states)	\\
\hline
$\vF$	&	Number of messages (states)	\\
\hline
$\vR$	&	Number of rounds in multi-round models	\\
\hline
$\vN$	&	Number of servers in multi-server models	\\
\hline
\end{tabular}
\end{center}
\end{table}

\subsection{Contribution: quantum PIR for quantum messages}

Even if quantum PIR has been studied for classical messages,
	there has been no study of PIR for quantum messages, i.e., quantum states.
This paper considers quantum PIR for quantum messages.
Throughout this paper, we denote quantum PIR for quantum messages as QPIR.
Downloading all quantum messages is also a trivial solution for QPIR. 
Intuitively, it seems reasonable to conjecture this trivial solution is optimal for QPIR. 
However,
	since the C-QPIR for the honest server has more efficient solutions than the trivial solution \cite{LeG12,KLLGR16},
	we cannot exclude the possibility of efficient one-server QPIR protocols.
Furthermore, 
	whereas the optimality proof of classical PIR \cite{CGKS98} uses the communication transcript between the server and the user, 
	we cannot apply the same technique because quantum states cannot be copied because of the no-cloning theorem.
Thus, the first goal is to answer 
	whether the trivial solution is also optimal QPIR under several attack models
	and how we can construct more efficient protocols than the trivial solution.
Furthermore, it is not unknown if symmetric QPIR (QSPIR) is possible on the multiple server model. 

%

\subsubsection{Two settings in QPIR}
QPIR can be studied in two distinct settings, called the {\em blind} and {\em visible} settings, 
	in which quantum state compression has also been extensively studied \cite{423,288,290,269,35,201}.
In the blind setting, 
	the server(s) contains 
		quantum systems $X_1,\ldots,X_{\vF}$ with the message states $\rho_1,\ldots,\rho_{\vF}$, respectively,
	but does not know the states of the systems.
Due to the no-cloning theorem, the server(s) cannot generate more copies of the message states and the server's operations are independent of the message states.
QPIR in the blind setting is suitable for the case where 
	the server(s) generates the message states by some quantum algorithm and performs the QPIR task.



On the other hand, in the visible setting, 
	the server(s) contains the descriptions of the message states.
With the descriptions of quantum states, the servers can generate multiple copies of the quantum states, without the limitation of the no-cloning theorem, 
	and apply quantum operations depending on the descriptions of the states.
Since any protocol in the blind setting can be considered as a protocol in the visible setting, 
	we can generally expect to achieve lower communication complexity in the visible setting.
Even if the description of a quantum state is classical information, it is infinite-length classical information,
	and therefore we cannot send this description via QPIR for classical messages. 
Furthermore, the visible setting is a reasonable setting for the case where the user has no ability to generate quantum states
	and requires the generation of the targeted state along with the QPIR task.

\subsubsection{Optimality of trivial solution for one-server QPIR}
Our first result is that 
	the trivial solution is optimal for one-server QPIR for the honest server model.
For the one-server case, 
	the comparison of the our results and previous results are summarized in Table~\ref{tab:existing_results}.
We first prove for the one-round case
	and 
	then extend the result to the multi-round case.
In the one-round case, we prove the optimality both for blind and visible settings, but in the multi-round case, we only prove for the blind setting.
The entropic inequalities are the key instruments for the proof,
	and we prove and use a chain rule for the multi-round case.
Since the honest server model is the weakest attack model, 
	this result implies that the trivial solution is also optimal for any attack models.

\subsubsection{One-server QPIR protocol with prior entanglement in blind setting}
Secondly,
	with prior entanglement between the user and the server,
	we prove that there exists an efficient QPIR protocol on the honest server model.
To be precise, we propose 
	a method to construct a QPIR protocol of communication complexity $O(f(\vM))$ with prior entanglement from C-QPIR protocol of communication complexity $O(f(\vM))$ with prior entanglement.
We construct the QPIR protocol by the combination of C-QPIR and quantum teleportation \cite{BBCJPW93}.
The proposed QPIR protocol inherits the security of C-QPIR.
With this property, on the honest server model with prior entanglement, there exists a QPIR protocol of communication complexity $O(\log \vM)$
	since there exist C-QPIR protocols of communication complexity $O(\log \vM)$ by Kerenidis et al. \cite{KLLGR16}.


\begin{table*}[ht]
\caption{Optimal communication complexity of one-server QPIR} \label{tab:existing_results}

\centering

\begin{subtable}{1\textwidth}
\centering
\caption{One-server QPIR}

\begin{tabular}{|c|c|c|c|}
\hline
Messages &	Server Model	& Optimal communication complexity		& Ref.	\\
\hline
Classical	& 	Honest 		& $ O(\mathrm{poly}\log \vM)$ &\cite{KLLGR16}	\\
Classical	&	Specious	&	$ \Theta(\vM)$ &\cite{BB15}	\\
Quantum (blind)		&	Honest	&	$ \Theta(\vM)$ &[This paper]	\\
Quantum (visible)	&	Honest	&	$ \Theta(\vM)$ (for one-round) &[This paper]\\
\hline
\end{tabular}
\end{subtable}

\vspace{1em}

\begin{subtable}{1\textwidth}
\centering
\caption{One-server QPIR with prior entanglement}
\begin{tabular}{|c|c|c|c|}
\hline
Messages &	Server Model	& Optimal communication complexity		& Ref.	\\
\hline
Classical	& 	Honest 		& $ O(\log \vM)$ 	&\cite{KLLGR16}	\\
Classical	&	Specious	&	$ \Theta(\vM)$ &\cite{ABCGLS19}	\\
Quantum (blind/visible)		&	Honest	&	$ O(\log \vM)$ & [This paper]	\\
\hline
\end{tabular}
\\[0.5em]
{\hfill ($\vM$: Total size of messages) \phantom{abcccccccccccccccccccccccc}}

\end{subtable}

\end{table*}

\subsubsection{Two-server QSPIR protocols for pure states in visible setting}

Lastly, we prove that QSPIR is possible in the visible setting. 
We propose two-server QSPIR protocols for pure states in the visible setting,
	and the protocols work for the different classes of states.
%
The first protocol is for pure qubit states.
This protocol succeeds with probability and 
	requires $8$-qubit average communication and $4$-ebit average entanglement between servers 
	(or $16$-qubit communication without shared entanglement).
The second protocol is for pure {\em qudit} states, where a qudit means a quantum $d$-level system.
This protocol also succeeds with probability and
	requires $4d^d\log d$-qubit average communication and $2d^d\log d$-ebit average entanglement between servers 
	(or $8d^d\log d$-qubit communication without shared entanglement).
The last protocol is for pure states $|\sU \rrangle = \sum_{s,t} u_{st} |s\rangle|t\rangle$ described by commutative unitary matrices $\sU = \sum_{s,t} u_{st} |s\rangle\langle t|$ on qudits.
This protocol succeeds deterministically with $4\log d$-qubit communication and $2\log d$ ebits
	(or $8\log d$-qubit communication without shared entanglement).

The QSPIR protocols have the following important properties. 
First, the protocols delegate most of the quantum information processing to the servers. 
All queries of the user are classical and the only quantum resource required for the user is a measurement apparatus for a fixed basis.
Second, the average quantum communication complexities of these protocols do not scale with the number of message states.
On the other hand, as a trade-off, the size of the queries scales linearly with the number of message states.

The idea of our QSPIR protocols is explained as follows.
In the visible setting, we can prepare the message states $|\psi_\ell\rangle$ by the state preparation unitaries $\sU_\ell$ such that $\sU_\ell|0\rangle = |\psi_\ell\rangle$.
We can decompose any state preparation unitaries $\sU_\ell$ on qudits 
	by phase-shift operations and rotation operations, which are described classically by $2(d-1)$ angles.
Then, with the query structure of classical PIR, 
	the servers collectively encode the phase-shift operations and the rotation operations composing $\sU_k$ 
		into the maximally entangled states, where $k$ is the index of the targeted message state $|\psi_k\rangle$.
By the query structure of classical PIR, this encoding process can be accomplished without leaking $k$ to each server. 
The user receives the maximally entangled states encoded with the phase-shift operations and the rotation operations,
	and finally recovers the targeted message state $|\psi_k\rangle$ by the entanglement-swapping.
The QSPIR protocol for qubit states is done more efficiently since 
	only one rotation operation is necessary for decomposing $\sU_\ell$.
For qudit states for $d \geq 3$,
	the rotation operations for decomposing $\sU_{\ell}$ are non-commutative
	and this non-commutativity leads to the increased complexity.

The remainder of the paper is organized as follows.
Section~\ref{sec:prelim} is the preliminaries of the paper.
Section~\ref{sec:blind-one-opt} derives the lower bound of the communication complexity for one-server QPIR.
In the blind setting, Section~\ref{sec:ent} proposes an efficient one-server QPIR protocol with prior entanglement.
Section~\ref{sec:visible} proposes an efficient QPIR protocol in the visible setting.
Section~\ref{sec:conclusion} is the conclusion of the paper.


\section{Preliminaries} \label{sec:prelim}
We define $[a:b]  = \{a,a+1, \ldots, b\}$ and $[a] = \{1,\ldots, a\}$.
The dimension of a quantum system $X$ is denoted by $|X|$. 
The von Neumann entropy is defined as $H(X) = H(\rho_X) = \Tr \rho_X\log \rho_X$, where $\rho_X$ is the state on the quantum system $X$.

\begin{prop}
The von Neumann entropy satisfies the following properties.

%

 \noindent$(a)$ $H(X) = H(Y)$ if the state on $X\otimes Y$ is a pure state,

 
 \noindent$(b)$ $H(XY) = H(X) + H(Y)$ for product states on $X\otimes Y$,
 
 \noindent$(c)$ Entropy does not change by unitary operations,
 
 \noindent$(d)$ $H(XY) + H(X) \geq H(Y)$,
 
 \noindent$(e)$ $H(\sum_s p_s \rho_s) = \sum_s p_s (H( \rho_s) - \log p_s)$ if $\Tr\rho_s \rho_t = 0$ for any $s\neq t$.
\end{prop}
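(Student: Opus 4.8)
The plan is to treat the five items essentially independently, since (a)--(c) and (e) reduce to spectral bookkeeping, while only (d) requires a genuine inequality. The organizing observation is that $H$ depends only on the spectrum of the density matrix: writing $\rho = \sum_i \lambda_i \ketbra{i}{i}$ in an eigenbasis gives $H(\rho) = -\sum_i \lambda_i \log \lambda_i$. Item (c) is then immediate, since a unitary $U$ sends $\rho$ to $U\rho U^\dagger$, which has the same eigenvalues and hence the same entropy.

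For (a) I would invoke the Schmidt decomposition of the pure state on $X \otimes Y$, namely $\ket{\psi} = \sum_i \sqrt{\lambda_i}\,\ket{i}_X \ket{i}_Y$; the reduced states $\rho_X$ and $\rho_Y$ then share the common spectrum $\{\lambda_i\}$, so $H(X) = H(Y)$. For (b), on a product state $\rho_X \otimes \rho_Y$ I would use the operator identity $\log(\rho_X \otimes \rho_Y) = (\log \rho_X)\otimes I + I \otimes (\log \rho_Y)$ and take the trace against $\rho_X \otimes \rho_Y$, separating the two terms with $\Tr \rho_X = \Tr \rho_Y = 1$.

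For (e) I would exploit the orthogonality $\Tr \rho_s \rho_t = 0$ for $s \neq t$, which forces the $\rho_s$ to have mutually orthogonal supports. Consequently the spectrum of $\sum_s p_s \rho_s$ is exactly $\{p_s \lambda_{s,i}\}_{s,i}$, where $\{\lambda_{s,i}\}_i$ is the spectrum of $\rho_s$. Substituting into $-\sum \lambda \log \lambda$ and splitting $\log(p_s \lambda_{s,i}) = \log p_s + \log \lambda_{s,i}$ yields the claimed identity after using $\sum_i \lambda_{s,i} = 1$.

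The only substantial step is (d), the Araki--Lieb triangle inequality in the form $H(XY) + H(X) \geq H(Y)$. Here I would introduce a reference system $R$ purifying $\rho_{XY}$, so that the global state on $X \otimes Y \otimes R$ is pure. Applying (a) to the bipartitions $XY \mid R$ and $Y \mid XR$ gives $H(XY) = H(R)$ and $H(Y) = H(XR)$, and it then suffices to invoke subadditivity $H(XR) \leq H(X) + H(R)$, which rearranges to $H(Y) \leq H(X) + H(XY)$. The main obstacle is therefore subadditivity itself: unlike the other items it is not a spectral identity but is equivalent to the nonnegativity of the quantum mutual information, so I would either cite it or derive it from the nonnegativity of the relative entropy $D(\rho_{XY} \,\|\, \rho_X \otimes \rho_Y) \geq 0$ (Klein's inequality).
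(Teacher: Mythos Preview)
Your proposal is correct and, for the only item the paper actually proves (namely (d)), takes exactly the same route: purify $\rho_{XY}$ with a reference system, use (a) to trade $H(XY)$ and $H(Y)$ for entropies involving the reference, and then reduce to subadditivity. The paper leaves (a), (b), (c), (e) as standard facts without proof, so your additional spectral arguments for those items simply go beyond what the paper supplies.
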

 
\noindent The property $(d)$ is proved as follows.
Let $Z$ be the reference system in which the state on $XYZ$ is pure. 
Then, $H(XY)+H(X)=H(Z)+H(X)\ge H(XZ)=H(Y)$.
Throughout the paper, we use the symbols $(a)$, $(b)$, $(c)$, $(d)$, $(e)$ to denote which property is used, e.g., $\stackrel{\mathclap{(a)}}{=}$ means that the equality holds from the property $(a)$.

%


\begin{prop}	 \label{prop:pure_uni}
Consider two quantum systems $A$ and $B$.
If any pure states $\rho_{AB}$ and $\rho_{AB}'$ satisfy $\rho_A = \rho_A'$, 
	there exists a unitary $\sU_B$ on $B$ such that $\rho_{AB} = (I_A \otimes \sU_B) \rho_{AB}' (I_A \otimes \sU_B^{*})$.
\end{prop}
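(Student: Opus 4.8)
The plan is to recognize this as the standard fact that any two purifications of the same density operator, living on the same purifying system, differ only by a unitary acting on that system; I would prove it constructively via the Schmidt decomposition. Since both $\rho_{AB}$ and $\rho_{AB}'$ are pure with identical reductions on $A$, they are exactly such a pair of purifications, and the task reduces to reading off the connecting unitary on $B$.

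First I would fix one spectral decomposition $\rho_A = \rho_A' = \sum_i \lambda_i \ketbra{i}{i}$, where $\{\ket{i}\}$ is an orthonormal eigenbasis of $A$ and the $\lambda_i \ge 0$ are the eigenvalues. Writing $\rho_{AB} = \ketbra{\psi}{\psi}$ and expanding $\ket{\psi} = \sum_i \ket{i}_A \otimes \ket{w_i}_B$ in this basis (with the $\ket{w_i}$ a priori arbitrary, possibly unnormalized vectors of $B$), the reduced state is $\rho_A = \sum_{i,j} \braket{w_j | w_i} \ketbra{i}{j}$. Comparing with the diagonal form forces $\braket{w_j | w_i} = \lambda_i \delta_{ij}$, so $\ket{w_i} = \sqrt{\lambda_i}\,\ket{e_i}$ with $\{\ket{e_i} : \lambda_i > 0\}$ orthonormal in $B$ and $\ket{w_i} = 0$ when $\lambda_i = 0$. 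Hence $\ket{\psi} = \sum_{i : \lambda_i > 0} \sqrt{\lambda_i}\,\ket{i}_A \ket{e_i}_B$. Applying the identical argument to $\rho_{AB}' = \ketbra{\phi}{\phi}$ — and this is where the hypothesis $\rho_A = \rho_A'$ is essential — gives $\ket{\phi} = \sum_{i : \lambda_i > 0} \sqrt{\lambda_i}\,\ket{i}_A \ket{f_i}_B$ with the \emph{same} coefficients $\sqrt{\lambda_i}$ and another orthonormal family $\{\ket{f_i}\}$ in $B$.

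Next I would define $\sU_B$ as the map sending $\ket{f_i} \mapsto \ket{e_i}$ for every $i$ with $\lambda_i > 0$. This is an isometry between the subspaces $\mathrm{span}\{\ket{f_i}\}$ and $\mathrm{span}\{\ket{e_i}\}$, which both have dimension $\rank \rho_A$, so their orthogonal complements in $B$ share the dimension $|B| - \rank \rho_A$; mapping an orthonormal basis of one complement onto an orthonormal basis of the other extends $\sU_B$ to a unitary on all of $B$. By construction $(I_A \otimes \sU_B)\ket{\phi} = \sum_{i : \lambda_i > 0} \sqrt{\lambda_i}\,\ket{i}_A \ket{e_i}_B = \ket{\psi}$, and taking outer products yields $\rho_{AB} = (I_A \otimes \sU_B)\rho_{AB}'(I_A \otimes \sU_B^{*})$, as required.

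The only genuinely delicate point is the extension of the partial isometry to a full unitary: one must check that the discarded complementary subspaces in $B$ have equal dimension, which holds precisely because the two Schmidt ranks both equal $\rank \rho_A$ and both purifications inhabit the same system $B$. Everything else is bookkeeping — in particular, the degeneracy freedom in the eigenbasis $\{\ket{i}\}$ is harmless, since I need only one fixed choice, used consistently for both $\ket{\psi}$ and $\ket{\phi}$.
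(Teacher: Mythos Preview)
Your proof is correct and follows essentially the same Schmidt-decomposition approach as the paper's own proof. You are in fact more careful than the paper: you write the Schmidt coefficients as $\sqrt{\lambda_i}$ (the paper writes $p_s$), and you explicitly handle the extension of the partial isometry $\ket{f_i}\mapsto\ket{e_i}$ to a full unitary on $B$, a point the paper glosses over.
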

\begin{proof}
Diagonalize $\rho_A = \sum_s p_s |s \rangle \langle s |$.
When $\rho_{AB} = |\psi_{AB}\rangle\langle \psi_{AB}|$ and $\rho_{AB}' = |\psi_{AB}'\rangle\langle \psi_{AB}'|$,
	we have
	$|\psi_{AB} \rangle = \sum_s p_s |s,f(s) \rangle$ 
	and
	$|\psi_{AB}' \rangle = \sum_s p_s |s,g(s) \rangle$,
	where $\{|f(s)\rangle\}$ and $\{|g(s)\rangle\}$ are sets of orthonormal vectors of $B$.
Thus, $\sU_B = \sum_s |f(s)\rangle\langle g(s)|$ is the desired unitary.
\end{proof}


For a $d_1\times d_2$ matrix 
	\begin{align}
	\sM = \sum_{s=0}^{d_1-1} \sum_{t=0}^{d_2-1} m_{st} |s\rangle\langle t|  \in \mathbb{C}^{d_1\times d_2},
	\end{align}
	we define 
\begin{align}
|\sM\rrangle = \sum_{s=0}^{d_1-1} \sum_{t=0}^{d_2-1} m_{st} |s\rangle| t\rangle  \in \mathbb{C}^{d_1}\otimes \mathbb{C}^{d_2}.
\end{align}
For $\sA \in \mathbb{C}^{d_1\times d_2}$, $\sB\in\mathbb{C}^{d_1 \times d_1}$, and $\sC \in\mathbb{C}^{d_2 \times d_2}$,
	we have the relation 
\begin{align}
(\sB\otimes \sC^{\top}) | \sA \rrangle =  | \sB\sA\sC \rrangle.
\end{align}


We call a $d$-dimensional system $\mathbb{C}^d$ a {\em qudit}.
Define generalized Pauli matrices and the maximally entangled state on qudits as 
\begin{align}
\sX_d &= \sum_{s=0}^{d-1} |s+1\rangle \langle s|,\\ 
\sZ_d &= \sum_{s=0}^{d-1} \omega^{s} |s\rangle \langle s|,\\
|\sI_d \rrangle &= \frac{1}{\sqrt{d}} \sum_{s=0}^{d-1} |s,s\rangle,
\label{eq:defs_pauli}
\end{align}
where $\omega = \exp(2\pi\ii/ d)$ and $\iota = \sqrt{-1}$.
%
We define 
	the generalized Bell measurements
	\begin{align}
		\mathbf{M}_{\sX\sZ,d} = \{ |\sX^a \sZ^b\rrangle \mid a,b \in [0:d-1] \}. \label{mes}
	\end{align}
If there is no confusion, we denote $\sX_d,\sZ_d, \sI_d, \mathbf{M}_{\sX\sZ,d}$ by $\sX,\sZ, \sI, \mathbf{M}_{\sX\sZ}$.
Let $A, A',B, B'$ be qudits.
If the state on $A\otimes A' \otimes B \otimes B'$ is $|\sA\rrangle\otimes|\sB\rrangle$ 
	and the measurement $\mathbf{M}_{\sX\sZ}$
	is performed on $A' \otimes B'$ with outcome $(a,b) \in [0:d-1]^2$,
	the resultant state is 
	\begin{align}
	|\sA\sX^a\sZ^{-b} \sB^\top\rrangle \in A \otimes B.
		\label{qe:feafterf}
	\end{align}
 

\section{Optimality of Trivial Protocol for One-server QPIR} \label{sec:blind-one-opt}

In this section, we prove that the trivial solution of downloading all messages is optimal 
		for the one-server QPIR in two cases.
We first prove the optimality 
	when the user and the server communicate for one-round in Section~\ref{sec:one}, 
and extends the optimality to the multi-round case in Section~\ref{sec:mult}.
The multi-round case is proved only in the blind setting.

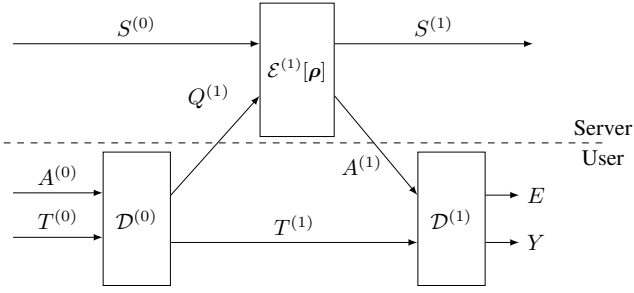
\begin{figure}[t]
\begin{center}
         \resizebox {\linewidth} {!} {
\begin{tikzpicture}[scale=0.5, node distance = 3.3cm, every text node part/.style={align=center}, auto]


	\node (enc1) {};
	\node [block, minimum height=6em, below right=-2.2em and 11em of enc1] (enc2) {\small $\cE^{(1)}[\bm{\rho}]$};
	\node [block, minimum height=6em, below right=4.5em and 4em of enc1] (dec1) {$\mathcal{D}^{(0)}$};
	\node [block, minimum height=6em, right=11em of dec1] (dec2) {$\mathcal{D}^{(1)}$};

	\node [below=6em of enc1] (dd1) {};
	\node [below=8em of enc1] (dd2) {};
	\path [line] (dd1) --node[above]{$A^{(0)}$} (dd1-|dec1.145);
	\path [line] (dd2) -- node[above]{$T^{(0)}$} (dd2-|dec1.215);
	

	\node [left=0em of enc1.145] (start) {};

	\node [below left=4em and -1em of start] (lefts) {};
	\node [right=27em of lefts] (rights) {};
	\path [line,-,dashed] (lefts) -- node[below,pos=0.99]{User} node[above,pos=0.99]{Server} (rights);

	\path [line] (enc1) -- node[above]{$S^{(0)}$} (enc1 -|enc2.145);
	\path [line] (enc2.325) -- node[below,pos=0.7,left=0.1em]{$A^{(1)}$} (dec2.145);

	\path [line] (dec1.35) --node[above,above left=1.3em and -1em]{$Q^{(1)}$} (enc2.215);
	\path [line] (dec1.325) -- node[above]{$T^{(1)}$} (dec2.215);

	\path [line] (enc2.35) --node[above]{$S^{(1)}$} ++(0:17.7em);
	\path [line] (dec2.325) -- ++(0:3em) node[right] {$Y$};
	\path [line] (dec2.35) -- ++(0:3em) node[right] {$E$};

\end{tikzpicture}
 }
\caption{One-round QPIR protocol in the visible setting.}		\label{fig:one-server} 
\end{center}
\end{figure}

\subsection{Optimality of trivial protocol for one-server one-round QPIR} \label{sec:one}

In this subsection, we prove that the optimal communication complexity for one-server one-round QPIR is $\vM$, where $\vM$ is the total number of qubits in the message states.
The achievability part is proved by the trivial solution, which can be implemented in the blind setting, 
	and 
	we will prove the tight lower bound in the visible setting.
Since any protocol in the blind setting can be regarded as a protocol in the visible setting, 
	our tight lower bound in the visible setting proves that the trivial solution is optimal for both blind and visible settings.

	{
We formally describe one-server one-round QPIR protocols in the visible setting with Figure~\ref{fig:one-server} as follows.
The message states are given as arbitrary $\vF$ states 
	$\bm{\rho} = (\rho_1,\ldots,\rho_{\vF})$ on quantum systems $X_1, \ldots, X_{\vF}$, respectively,
The server has the descriptions of all messages states $\bm{\rho}$.
The user chooses the index of the targeted message $K\in[\vF]$, i.e., $\rho_k$ is the targeted quantum state when $K=k$.
We assume that the user and the server contain local quantum registers, respectively, 
	so that all local operations are written as unitary operations.
A QPIR protocol $\Phi$ is described by three unitary maps $(\cD^{(0)}, \cD^{(1)}, \cE^{(1)})$ in the following steps.

\begin{enumerate}[leftmargin=1.5em]
\item \textbf{Query}: 
	When $K=k$,
		the user prepares the initial state as $|k\rangle \otimes |0\rangle \in A^{(0)} \otimes T^{(0)}$,
	applies a unitary map $\cD^{(0)}$ from $A^{(0)} \otimes T^{(0)}$ to $Q^{(1)}\otimes T^{(1)}$,
	and sends $Q^{(1)}$ to the server.

\item \textbf{Answer}:
The server prepares the initial state $|0\rangle \in S^{(0)}$,
	applies a unitary map $\cE^{(1)}[{\bm{\rho}}]$ 
		from $S^{(0)} \otimes Q^{(1)}$ to $A^{(1)}\otimes S^{(1)}$,
	and returns to the user the system $A^{(1)}$.

\item \textbf{Reconstruction}:
	The user applies a unitary map $\cD^{(1)}$ from $A^{(1)}\otimes T^{(1)}$ to $Y\otimes E$,
	and outputs the state on $Y$ as the protocol output.
\end{enumerate}

The input-output relation 
	$\Lambda_{\Phi}(k,{\bm{\rho}})$ 
		of the QPIR protocol $\Phi$ is written with 
		a CPTP $\Gamma_{\Phi,{\bm{\rho}}}$ as
	\begin{align*}
	\Lambda_{\Phi}(k,{\bm{\rho}}) 
	&= \Gamma_{\Phi,{\bm{\rho}}} (|k\rangle\langle k|)\\
	&= \Tr_{S^{(1)},E} \cD^{(1)} \circ \cE^{(1)}[{\bm{\rho}}] \circ \cD^{(0)} (|k\rangle\langle k|\otimes |0\rangle\langle 0|).
	\end{align*}
The QPIR protocol $\Phi$ should satisfy the following conditions.
\begin{itemize}[leftmargin=1.5em]
\item \textbf{Correctness}: 
	When the output state on $Y$ with the target index $K=k$ is denoted by $\rho_Y^{k}$,
	the correctness is 
	\begin{align}
	\rho_Y^k = \rho_{k}
	 \label{eq:corr_one}
	\end{align}
	for any $k$ and any message states $\bm{\rho}$.

\item \textbf{User secrecy}: 
	When the state on $Q^{(1)}$ with the target index $K=k$ is denoted by 
	$\rho_{Q^{(1)}}^k,$
	the user secrecy is 
		\begin{align}
		\rho_{Q^{(1)}}^k
		&= \rho_{Q^{(1)}}^{\ell}
		\label{eq:sec1cc}
		\end{align}
		for any $k,\ell$.
\end{itemize}
}

We evaluate the efficiency of a QPIR protocol $\Phi$ by the communication complexity 
	$\CC \coloneqq \log |Q^{(1)}| +  \log |A^{(1)}|$, which is the whole dimension of uploaded and downloaded systems.
The communication complexity of the trivial solution of downloading all states
	is $\sum_{\ell=1}^{\vF} \log |X_\ell|$. 
The following theorem proves the optimality of the trivial solution. 
\begin{theo} \label{theo:one}
For any one-server one-round QPIR protocol $\Phi$, 
	the communication complexity $\CC$ is lower bounded as
\begin{align}
\CC \ge \sum_{\ell=1}^{\vF} \log |X_\ell|, \label{eq:oneee}
\end{align}
where $X_\ell$ is the system of the $\ell$-th message $\rho_\ell$.
\end{theo}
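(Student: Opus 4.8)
The plan is to exhibit a hard instance, use user secrecy to show that a \emph{single} execution must already let the user reconstruct \emph{every} message, and then bound the correlation this forces onto the communicated systems. The engine that connects different target indices is Proposition~\ref{prop:pure_uni}. First I would record the structural consequence of secrecy. By \eqref{eq:sec1cc} the query marginal $\rho_{Q^{(1)}}^k$ is the same state $\sigma_Q$ for all $k$. Since the user's post-$\cD^{(0)}$ state on $Q^{(1)}\otimes T^{(1)}$ is pure and has marginal $\sigma_Q$ on $Q^{(1)}$ for every $k$, Proposition~\ref{prop:pure_uni} (with $A=Q^{(1)}$, $B=T^{(1)}$) gives local unitaries $\sU^k$ on $T^{(1)}$ with $\ket{\psi^k}_{Q^{(1)}T^{(1)}}=(I\otimes \sU^k)\ket{\psi^1}_{Q^{(1)}T^{(1)}}$. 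Because the server's $\cE^{(1)}$ acts only on $S^{(0)}\otimes Q^{(1)}$, it commutes with $\sU^k$, so the whole pre-reconstruction pure state differs across targets only by $\sU^k$ on $T^{(1)}$. Two things follow: the reduced state on $A^{(1)}$ (and on the server side) is independent of $k$; and, crucially, from the \emph{one} physical pair $(A^{(1)},T^{(1)})$ the user can recover $\rho_k$ for \emph{any} $k$, simply by applying $\sU^k$ to $T^{(1)}$ and then the fixed decoder $\cD^{(1)}$.

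Next I would convert ``reconstruct every message'' into a dimension bound. The cleanest instance takes each message maximally mixed, $\rho_\ell=I/|X_\ell|$, purified by an external reference $R_\ell$ via a maximally entangled $\ket{\Phi_\ell}_{R_\ell X_\ell}$, so that $H(R)=\sum_\ell \log|X_\ell|$ for $R=R_1\cdots R_{\vF}$ and the protocol never touches $R$. Correctness \eqref{eq:corr_one} holding for \emph{all} input states means the effective map $X_k\to Y$ (for fixed target $k$) is the identity channel, hence it preserves entanglement: $\rho^k_{YR_k}=\ketbra{\Phi_k}{\Phi_k}$. Combined with the previous paragraph, applying $\sU^k$ then $\cD^{(1)}$ to the \emph{same} received systems produces a $Y$ maximally entangled with $R_k$, for every $k$. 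Thus each $R_k$ is \emph{maximally} correlated with $A^{(1)}T^{(1)}$, which by the usual monogamy/purity bookkeeping (properties $(a)$, $(c)$, $(d)$ and the fact that $I(R_k;A^{(1)}T^{(1)})\le 2H(R_k)$ is saturated) forces $R_k\perp S^{(1)}R_{\ne k}$; iterating gives $R\perp S^{(1)}$, and purity of the global state then yields $I(R;A^{(1)}T^{(1)})=2\sum_\ell\log|X_\ell|$.

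Finally I would squeeze this correlation through the channel. Since $T^{(1)}$ is prepared from $\ket{k}\ket{0}$ independently of the messages, $I(R;T^{(1)})=0$, so by the chain rule $I(R;A^{(1)}T^{(1)})=I(R;A^{(1)}\mid T^{(1)})\le 2\log|A^{(1)}|\le 2\,\CC$. Comparing with the value computed above gives $\sum_\ell \log|X_\ell|\le \log|A^{(1)}|\le \CC$, which is \eqref{eq:oneee}. (The achievability half is the trivial download protocol, of complexity $\sum_\ell\log|X_\ell|$.)

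The main obstacle is the \emph{visible} setting, which is exactly where the theorem is stated. There $\cE^{(1)}[\bm\rho]$ depends arbitrarily on the classical descriptions, so correctness no longer forces an identity channel and a reference $R_k$ cannot be attached with its entanglement preserved; the clean ``$I(R_k;\,\cdot\,)$ is maximal'' step breaks down. I expect the genuine difficulty to be recovering the \emph{sum} $\sum_\ell\log|X_\ell|$ rather than the average: the naive device of superposing a uniform index register and tracing to $(I,Y)$ only delivers $\tfrac1{\vF}\sum_\ell\log|X_\ell|$, because the uniform mixture washes out all but one message. The visible argument must therefore track the information carried jointly by the query \emph{and} the answer directly—e.g. through the server-side state $\omega_S$ with $H(\omega_S)=H(A^{(1)}T^{(1)})\le\CC$ and the constraint that the one fixed $\omega_S$ admits, via the shared decoder $\cD^{(1)}$ and the family $\{\sU^k\}$, a purification reproducing every $\rho_k$—which is why the bound appears as $\log|Q^{(1)}|+\log|A^{(1)}|$ rather than $\log|A^{(1)}|$ alone, and is the delicate step I would expect to consume most of the effort.
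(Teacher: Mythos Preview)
Your blind-setting argument is sound and in fact coincides with the paper's \emph{multi-round} blind proof (Theorem~\ref{theo:multiround}) specialized to $\vR=1$: your monogamy step ``$R_k\perp S^{(1)}R_{\neq k}$, iterate'' is exactly Lemma~\ref{lemm:prrrd}, and you even extract the sharper conclusion $\sum_\ell\log|X_\ell|\le\log|A^{(1)}|$ there. But as you yourself flag, this leaves Theorem~\ref{theo:one} unproved, since that theorem is stated---and its content lies---in the \emph{visible} one-round model, where no reference $R_\ell$ can be attached and your mutual-information chain collapses.

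The idea you are missing is simple once seen. Do not average over the target index; instead fix $k=1$ and range over \emph{basis} messages $\bm\rho=(|s_1\rangle,\ldots,|s_{\vF}\rangle)$. For each target $k$, correctness forces the states on $A^{(1)}T^{(1)}$ to be orthogonal whenever $s_k\neq s_k'$. Now use the very unitaries $\sU^{1\to k}$ on $T^{(1)}$ that you already built from secrecy: since they are message-independent, they transport orthogonality at target $k$ back to target $1$. Hence all $d=\prod_\ell|X_\ell|$ states $\rho_{A^{(1)}T^{(1)}}^{1,|s_1,\ldots,s_{\vF}\rangle}$ are pairwise orthogonal, so their uniform mixture has entropy at least $\log d$; bounding $H(A^{(1)}T^{(1)})\le H(A^{(1)})+H(T^{(1)})=H(A^{(1)})+H(Q^{(1)})\le\CC$ (using that $T^{(1)}$ purifies $Q^{(1)}$ and is untouched by the server) gives \eqref{eq:oneee}. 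Your worry that ``averaging an index register loses a factor $\vF$'' concerned averaging over $k$; the paper instead averages over the \emph{messages} with $k$ fixed, and the secrecy unitaries are precisely what allow the orthogonality constraints from every $k$ to be combined without loss.
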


For the proof of Theorem~\ref{theo:one}, we prepare the following notations and lemma.
Given the description of states ${\bm{\rho}}$ and the target index $k$,
	we denote the state on the system $A^{(1)}\otimes T^{(1)}$ by $\rho_{A^{(1)}T^{(1)}}^{k,{\bm{\rho}}}$.
{If the message states are pure states $|\phi_1\rangle, \ldots , |\phi_{\vF}\rangle$,
	we denote $\rho_{A^{(1)}T^{(1)}}^{k,{\bm{\rho}}}$ by $\rho_{A^{(1)}T^{(1)}}^{k,|\phi_1,\ldots ,\phi_{\vF}\rangle}$.}
Let $d_{\ell} \coloneqq |X_{\ell}|$
	and $d \coloneqq \prod_{\ell=1}^{\vF} d_{\ell}$. 

\begin{lemm}	\label{lemm:pruni}
{Suppose \eqref{eq:sec1cc}.}
For any $k,\ell \in [\vF]$,
	there exists a unitary $\sU_{T^{(1)}}^{k\to \ell}$ on $T^{(1)}$ such that 
	\begin{align}
	\sU_{T^{(1)}}^{k\to \ell}\rho_{A^{(1)}T^{(1)}}^{k,{\bm{\rho}}}(\sU_{T^{(1)}}^{k\to \ell})^{\dagger}
	 &= \rho_{A^{(1)}T^{(1)}}^{\ell,{\bm{\rho}}}
	 \label{lemeq:uni}
	\end{align}
	for any message states $\bm{\rho}$.
\end{lemm}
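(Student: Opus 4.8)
The plan is to reduce the claim to Proposition~\ref{prop:pure_uni} by working with the pure state that the user holds \emph{before} sending $Q^{(1)}$ to the server, and then transporting the resulting unitary through the server's operation. Since $\cD^{(0)}$ is unitary and acts on the pure input $|k\rangle \otimes |0\rangle$, the pre-transmission state
\begin{align}
|\Psi^k\rangle \coloneqq \cD^{(0)}(|k\rangle \otimes |0\rangle) \in Q^{(1)} \otimes T^{(1)}
\end{align}
is pure for every target index $k$. The user-secrecy condition~\eqref{eq:sec1cc} states precisely that the reduced state $\rho_{Q^{(1)}}^k = \Tr_{T^{(1)}} |\Psi^k\rangle\langle\Psi^k|$ is independent of $k$. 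Applying Proposition~\ref{prop:pure_uni} with $A = Q^{(1)}$ and $B = T^{(1)}$ to the pair $|\Psi^\ell\rangle, |\Psi^k\rangle$ therefore yields a unitary $\sU_{T^{(1)}}^{k\to\ell}$ on $T^{(1)}$ with
\begin{align}
(I_{Q^{(1)}} \otimes \sU_{T^{(1)}}^{k\to\ell}) |\Psi^k\rangle\langle\Psi^k| (I_{Q^{(1)}} \otimes \sU_{T^{(1)}}^{k\to\ell})^{\dagger} = |\Psi^\ell\rangle\langle\Psi^\ell|.
\end{align}

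Next I would propagate this identity through the server's answer step. Writing $\cE^{(1)}[\bm{\rho}]$ as a unitary on $S^{(0)} \otimes Q^{(1)} \to A^{(1)} \otimes S^{(1)}$, the state $\rho_{A^{(1)}T^{(1)}}^{k,\bm{\rho}}$ is obtained by applying $\cE^{(1)}[\bm{\rho}]$ to $|0\rangle\langle 0|_{S^{(0)}} \otimes |\Psi^k\rangle\langle\Psi^k|$ and tracing out $S^{(1)}$. The structural observation driving the proof is that $\sU_{T^{(1)}}^{k\to\ell}$ acts on $T^{(1)}$, whereas $\cE^{(1)}[\bm{\rho}]$ acts on the disjoint factor $S^{(0)} \otimes Q^{(1)}$; hence the two commute, and $\sU_{T^{(1)}}^{k\to\ell}$ likewise commutes with $\Tr_{S^{(1)}}$ since $S^{(1)}$ is disjoint from $T^{(1)}$. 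Substituting the displayed identity for $|\Psi^\ell\rangle\langle\Psi^\ell|$ and pulling $\sU_{T^{(1)}}^{k\to\ell}$ outside both the server unitary and the partial trace gives exactly
\begin{align}
\rho_{A^{(1)}T^{(1)}}^{\ell,\bm{\rho}} = \sU_{T^{(1)}}^{k\to\ell}\, \rho_{A^{(1)}T^{(1)}}^{k,\bm{\rho}}\, (\sU_{T^{(1)}}^{k\to\ell})^{\dagger},
\end{align}
which is~\eqref{lemeq:uni}.

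The point I would emphasize — and the only subtlety worth checking carefully — is that $\sU_{T^{(1)}}^{k\to\ell}$ is constructed \emph{entirely on the user side}, from $\cD^{(0)}$ and the indices $k,\ell$ alone, \emph{before} any dependence on $\bm{\rho}$ enters the protocol. The message states $\bm{\rho}$ appear only inside $\cE^{(1)}[\bm{\rho}]$, which acts on a factor disjoint from $T^{(1)}$; consequently the same unitary $\sU_{T^{(1)}}^{k\to\ell}$ intertwines the two states for \emph{every} choice of $\bm{\rho}$ simultaneously, as the lemma demands. I do not expect a genuine obstacle: the argument is a direct invocation of Proposition~\ref{prop:pure_uni} followed by the commutation of operators supported on disjoint tensor factors, and the only thing that must be verified with care is this $\bm{\rho}$-independence of the user-side unitary.
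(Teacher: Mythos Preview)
Your proposal is correct and follows essentially the same approach as the paper: apply Proposition~\ref{prop:pure_uni} to the pure user-side state on $Q^{(1)}\otimes T^{(1)}$ using secrecy~\eqref{eq:sec1cc}, then observe that the server's map $\cE^{(1)}[\bm{\rho}]$ acts on a tensor factor disjoint from $T^{(1)}$ so the resulting unitary (which is $\bm{\rho}$-independent by construction) still intertwines the post-answer states. Your write-up is somewhat more explicit than the paper's about the commutation with the server unitary and the partial trace over $S^{(1)}$, but the logic is identical.
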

\begin{proof}
Let $\rho_{Q^{(1)} \otimes T^{(1)}}^{k}$ on $Q^{(1)} \otimes T^{(1)}$ be the state on $Q^{(1)} \otimes T^{(1)}$ for user's input $k\in[\vF]$.
From the secrecy condition \eqref{eq:sec1cc},
 	the state on $Q^{(1)}$ does not depend on the value of $k$.
Since the state on $Q^{(1)} \otimes T^{(1)}$ is a pure state,
	there exists a unitary $\sU_{T^{(1)}}^{k\to \ell}$ on $T^{(1)}$ 
	that maps  $\rho_{Q^{(1)} \otimes T^{(1)}}^{k}$ to $\rho_{Q^{(1)} \otimes T^{(1)}}^{\ell}$ 
		by Proposition~\ref{prop:pure_uni}.
This unitary $\sU_{T^{(1)}}^{k\to \ell}$ does not depend on the message states $\bm{\rho}$.
Since the server's operation is not applied to $T^{(1)}$,
	the same unitary $\sU_{T^{(1)}}^{k\to \ell}$ satisfies \eqref{lemeq:uni}.
%
\end{proof}

\begin{proof}[Proof of Theorem~\ref{theo:one}]
{First, we prove that 
	\begin{align}
	\px*{
		\rho_{A^{(1)}T^{(1)}}^{1,{|s_1,\ldots , s_{\vF}\rangle }}
		\mid
		s_\ell \in [0:d_{\ell}-1], \ \ell \in [\vF]
	}
	\label{eq:states_orth}
	\end{align}
	is a set composed of orthogonal states.
Let $(s_1,\ldots,s_{\vF})$, $(s_1',\ldots, s_{\vF}')$ be any two different strings
	and $k\in[\vF]$ be an index satisfying $s_k \neq s_k'$.
We have 
	\begin{align}
	\Tr 
		\rho_{A^{(1)}T^{(1)}}^{k,{|s_1,\ldots , s_{\vF}\rangle}}
		\rho_{A^{(1)}T^{(1)}}^{k,{|s_1',\ldots , s_{\vF}'\rangle}}
		=
		0,
		\label{eq:labb13}
	\end{align}
	since the correctness condition \eqref{eq:corr_one} guarantees that the two states can be made orthogonal by the decoding unitary map $\cD^{(1)}$.
Furthermore, 
	for the same $(s_1,\ldots,s_{\vF})$ and $(s_1',\ldots, s_{\vF}')$,
	Eq.~\eqref{eq:labb13} implies that
	\begin{align}
	\Tr 
		\rho_{A^{(1)}T^{(1)}}^{1,{|s_1,\ldots , s_{\vF}\rangle}}
		\rho_{A^{(1)}T^{(1)}}^{1,{|s_1',\ldots , s_{\vF}'\rangle}}
		=
		0,
		\label{eq:equiv_ort}
	\end{align}
	because there exists a unitary map $\sU_{T^{(1)}}^{1\to k}$ on ${T^{(1)}}$ such that 
	\begin{align}
	\sU_{T^{(1)}}^{1\to k}\rho_{A^{(1)}T^{(1)}}^{1,{|s_1,\ldots , s_{\vF}\rangle}}(\sU_{T^{(1)}}^{1\to k})^{\dagger}
	 &= \rho_{A^{(1)}T^{(1)}}^{k,{|s_1,\ldots , s_{\vF}\rangle}}, \\
	\sU_{T^{(1)}}^{1\to k}\rho_{A^{(1)}T^{(1)}}^{1,{|s_1',\ldots , s_{\vF}'\rangle}}(\sU_{T^{(1)}}^{1\to k})^{\dagger}
	 &= \rho_{A^{(1)}T^{(1)}}^{k,{|s_1',\ldots , s_{\vF}'\rangle}}
	\end{align}
	from Lemma~\ref{lemm:pruni}. 
Since \eqref{eq:equiv_ort} holds for any different $(s_1,\ldots, s_{\vF})$ and $(s_1',\ldots, s_{\vF}')$,
	the set \eqref{eq:states_orth} is composed of orthogonal states.}


Next, 
	we consider the case where
	the user's input is $k=1$ 
	and
	the server applies the CPTP map
	\begin{align}
	\sum_{s_1,\ldots,s_{\vF}}
	\frac{1}{d} \cE^{(1)}[|s_1,\ldots,s_{\vF}\rangle],
	\end{align}
	which is the uniform mixture of encoding maps.
Then, the state on $A^{(1)}\otimes R^{(1)}$ is 
	\begin{align}
	\rho_{\ast}  
	&\coloneqq \sum_{s_1,\ldots,s_{\vF}}
	\frac{1}{d} \cE^{(1)}[|s_1,\ldots,s_{\vF}\rangle] \circ \cD^{(0)} \paren*{ |1\rangle\langle 1| \otimes |0\rangle\langle 0|} 
	\\
	&= 
		\sum_{s_1,\ldots,s_{\vF}} 
		\frac{1}{d} 
		\rho_{A^{(1)}T^{(1)}}^{1,{|s_1,\ldots , s_{\vF}\rangle }},
	\end{align}
	and the von Neumann entropy of $\rho_{\ast}$ is 
	\begin{align}
	H(A^{(1)} T^{(1)})_{\rho_{\ast}}
	&=
	H(\rho_{\ast}) 
	\\
	&\stackrel{\mathclap{(e)}}{=} 
		\sum_{s_1,\ldots,s_{\vF}} 
		\frac{1}{d}  H(\rho_{A^{(1)}T^{(1)}}^{1,{|s_1,\ldots , s_{\vF}\rangle }} )
		+ \log d
		\label{eq:eq_disjoint}
	\\
	&\geq \log d.
		\label{ineq:low_cctp}
	\end{align}
	where \eqref{eq:eq_disjoint} follows from the orthogonality of the states in \eqref{eq:states_orth}.
Furthermore,
	we have 
	\begin{align}
	H(A^{(1)}T^{(1)})_{\rho_\ast} &\le H(A^{(1)})_{\rho_\ast} + H(T^{(1)})_{\rho_\ast}\\
			&\stackrel{\mathclap{(a)}}{=}  H(A^{(1)})_{\rho_\ast} + H(Q^{(1)})_{\rho_{Q^{(1)}}^1}\\
			&\le \log |A^{(1)}| + \log |Q^{(1)}| \\
			&= \CC.
	\label{ineq:lowccone}
	\end{align}
Combining \eqref{ineq:low_cctp} and \eqref{ineq:lowccone}, we obtain the desired inequality \eqref{eq:oneee}.
%
%
\end{proof}

\begin{figure*}[t]
\begin{center}
\begin{tikzpicture}[scale=0.5, node distance = 3.3cm, every text node part/.style={align=center}, auto]

	
	\node [block, minimum height=6em] (enc1) {$\mathcal{E}^{(1)}$};
	\node [block, minimum height=6em, right=11em of enc1] (enc2) {$\mathcal{E}^{(2)}$};
	\node [block, minimum height=6em, below right=2.5em and 4em of enc1] (dec1) {$\cD^{(1)}$};
	\node [block, minimum height=6em, right=11em of dec1] (dec2) {$\cD^{(2)}$};

	\node [left=7em of enc1.145] (start) {\phantom{$X_{[\vF]}$}};
	\node [above=0.3em of start] (XX) {\phantom{$X_{[\vF]}$}};
	\path [line] (XX.east) --++(4em, 4em) -- node[above, pos=0.07] {$R_{[\vF]}$} ++(0:62em);
	\path [line] (XX.east) --++(4em,-4em) --node[above]{$X_{[\vF]}$}  (start -| enc1.west);
	
	\node [below left=5.7em and 7em of enc1.145] (K)  {$K$};
	\path [line] (K.east) --++(4em, 8.5em) -- node[above] {$Q^{(1)}$} (enc1.215);
	\path [line] (K.east) --++(4em,-8.5em) --node[above, pos=0.16]{$T^{(1)}$}  (dec1.215);

	\node [below left=3.5em and -1em of start] (lefts) {};
	\node [right=35em of lefts] (rights) {};
	\path [line,-,dashed] (lefts) -- node[below,pos=0.99]{User} node[above,pos=0.99]{Server} (rights);

	\path [line] (enc1.35) --node[above]{$S^{(1)}$} (enc2.145);
	\path [line] (enc1.325) -- node[below,pos=0.7,left=0.1em]{$A^{(1)}$} (dec1.145);
	\path [line] (enc2.325) -- node[below,pos=0.7,left=0.1em]{$A^{(2)}$} (dec2.145);

	\path [line] (dec1.35) --node[above,above left=1.3em and -1em]{$Q^{(2)}$} (enc2.215);
	\path [line] (dec1.325) -- node[above]{$T^{(2)}$} (dec2.215);

	\path [line] (enc2.35) --node[above]{$S^{(2)}$} ++(0:17.7em);
	\path [line] (dec2.325) -- ++(0:3em) node[right] {$Y$};
	\path [line] (dec2.35) -- ++(0:3em) node[right] {$E$};

\end{tikzpicture}
\caption{$2$-round QPIR protocol in the blind setting.}		\label{fig:flow}
\end{center}
\end{figure*}
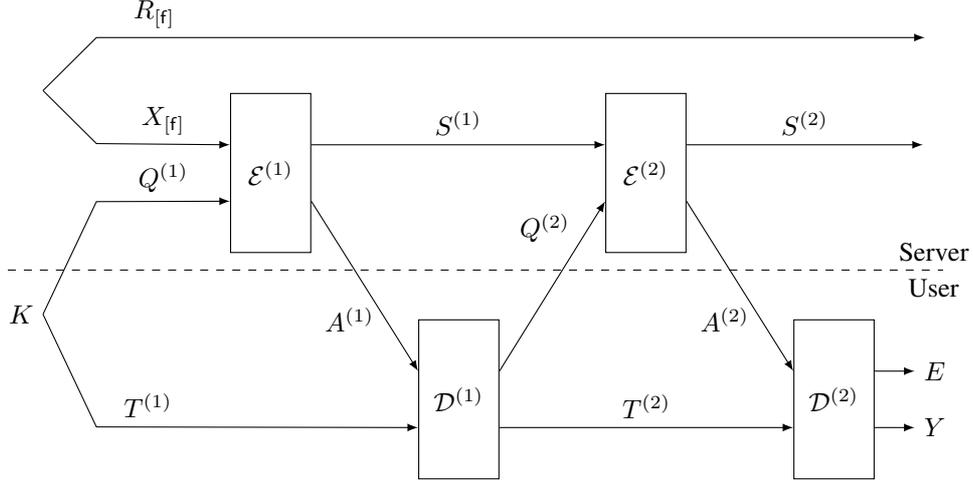

\subsection{Optimality of trivial protocol for one-server multi-round QPIR in blind setting} \label{sec:mult}
{
In this subsection, we extend the the optimality of the trivial solution to the case where the user and the server communicate multiple rounds in the blind setting.
To be precise, we define the $\vR$-round QPIR protocol in the blind setting as follows ($2$-round protocol is depicted in Figure~\ref{fig:flow}).

The message states are given as arbitrary $\vF$ states 
	$\rho_1\otimes \cdots\otimes \rho_{\vF}$ on $S^{(0)} = X_{1}\otimes \cdots \otimes X_{\vF}$, where each of $\rho_\ell$ is purified in $X_\ell\otimes  R_\ell$. 
The server contains the system $S^{(0)}$.
The user chooses the index of the targeted message $K\in[\vF]$, i.e., $\rho_k$ is the targeted quantum state when $K=k$.
When $K=k$, the user prepares the initial state as $|k\rangle \otimes |0\rangle \in A^{(0)} \otimes T^{(0)}$,
We assume that the user and the server contain local quantum registers, respectively, which enable that all local operations are written as unitary operations.
A QPIR protocol $\Phi$ is described by unitary maps $\cD^{(0)},\ldots,\cD^{(\vR)}, \cE^{(1)},\ldots,\cE^{(\vR)}$ in the following steps.

\begin{enumerate}[leftmargin=1.5em]
\item \textbf{Query}: 
For all $i \in [\vR]$,
	the user 
	applies a unitary map $\cD^{(i-1)}$ from $A^{(i-1)}\otimes T^{(i-1)}$ to $Q^{(i)} \otimes T^{(i)}$,
	and sends $Q^{(i)}$ to the sender.

\item \textbf{Answer}:
For all $i \in [\vR]$,
	the server applies a unitary map $\mathcal{E}^{(i)}$ from $Q^{(i)} \otimes S^{(i-1)}$ to $A^{(i)}\otimes S^{(i)} $
	and sends $A^{(i)}$ to the user.

\item \textbf{Reconstruction}:
	The user applies $\cD^{(\vR)}$ from $A^{(\vR)}\otimes T^{(\vR)}$ to $Y \otimes E$, 
	and outputs the state on $Y$ as the protocol output.
	
\end{enumerate}

The input-output relation $\Lambda_{\Phi}$ of the protocol $\Phi$ is written with a CPTP $\Gamma_{\Phi,k}$ from $X_{[\vF]}$ to $Y$ as
	\begin{align*}
	&\Lambda_{\Phi} (k,\rho_1,\ldots,\rho_{\vF}) 
	= \Gamma_{\Phi,k} (\rho_{[\vF]})\\
	&= \Tr_{S^{(\vR)},E}  \cD \ast \cE ( \rho_{[\vF]} \otimes \cD^{(0)} (|k\rangle\langle k|  \otimes |0\rangle\langle 0|) ),
	\end{align*}
	where $\cD \ast \cE = ( \cD^{(\vR)} \circ \cE^{(\vR)} )\circ\cdots \circ ( \cD^{(1)} \circ \cE^{(1)} ) $.
The QPIR protocol $\Phi$ should satisfy the following conditions.
\begin{itemize}[leftmargin=1.5em]
\item \textbf{Correctness}: 
	When $|\psi_k\rangle \langle \psi_k |$ denotes a purification of $\rho_k$ with the reference system $R_k$,
	the correctness is 
	$$
	\Gamma_{\Phi,k}\otimes \id_{R_k}(\rho_{[\vF]\setminus \{k\}} \otimes |\psi_k\rangle \langle \psi_k |)
	 = |\psi_k\rangle \langle \psi_k |
	$$
	for any $K=k$ and any state $\rho_{[\vF]}$.

\item \textbf{User secrecy}: 
	When the state on $S^{(i-1)} \otimes Q^{(i)}$ with the target index $K=k$ is denoted by 
	$\rho_{S^{(i-1)} Q^{(i)}}(k),$
	the user secrecy is 
		\begin{align}
		\rho_{S^{(i-1)} Q^{(i)}}(k)
		&= \rho_{S^{(i-1)} Q^{(i)}}(k')
		\label{eq:sec2cc}
		\end{align}
		for any $k,k',i$.
\end{itemize}
The communication complexity of the one-server multi-round QPIR is written as $\CC = \sum_{i=1}^{\vR}  \log |Q^{(i)}| +  \log |A^{(i)}|$.
}

\begin{theo} \label{theo:multiround}
For any one-server multi-round QPIR protocol $\Phi$ in the blind setting,
the communication complexity $\CC$ is lower bounded by $\sum_{\ell=1}^{\vF} \log |X_\ell|$, where $X_\ell$ is the system of the $\ell$-th message $\rho_\ell$.
\end{theo}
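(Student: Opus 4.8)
The plan is to mimic the one-round proof of Theorem~\ref{theo:one} but to track the entropy across all $\vR$ rounds, replacing the single-step argument with an inductive (chain-rule) argument. As in the one-round case, I would first establish a multi-round analogue of Lemma~\ref{lemm:pruni}: by the user secrecy condition \eqref{eq:sec2cc}, at each round $i$ the state on $Q^{(i)}$ is independent of the target index $k$, and since the server never touches the user's private register $T^{(i)}$, Proposition~\ref{prop:pure_uni} yields a unitary $\sU_{T^{(i)}}^{k\to\ell}$ on $T^{(i)}$ relating the joint states for inputs $k$ and $\ell$, independently of the message states. This lets me transfer orthogonality statements between different target indices exactly as in the single-round proof. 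Using correctness (now in the purified form, so that the reference systems $R_\ell$ are available) together with these unitaries, I would argue that the family of joint states indexed by the $\vF$-fold computational-basis inputs is orthogonal, so that feeding a uniform mixture over basis message states produces a state with von Neumann entropy at least $\log d = \sum_\ell \log|X_\ell|$, recovering the lower bound \eqref{ineq:low_cctp}.

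The new ingredient is to upper-bound this same entropy by the total communication complexity $\CC = \sum_{i=1}^{\vR}(\log|Q^{(i)}|+\log|A^{(i)}|)$. In the one-round case this came from the subadditivity step \eqref{ineq:lowccone}, but over many rounds I need to peel off the contributions round by round. I would consider the global pure state on all the systems (user registers $T^{(i)}$, server registers $S^{(i)}$, communication systems $A^{(i)}, Q^{(i)}$, and references $R_{[\vF]}$) and bound the entropy of the ``output-carrying'' subsystem by telescoping. At each round, the entropy increase should be controllable by the dimension of the newly communicated system via property~$(d)$, i.e.\ $H(XY)+H(X)\ge H(Y)$, applied with $X$ the just-transmitted register whose entropy is at most $\log$ of its dimension. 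Iterating this across the $\vR$ rounds is precisely the ``chain rule for the multi-round case'' that the introduction advertises proving, so I would state and prove that chain rule as a separate lemma and then invoke it.

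Concretely, the chain-rule lemma should say something like: for the sequence of unitary interactions defining $\Phi$, the entropy of the final reconstruction-relevant system is bounded by the sum over rounds of $\log|Q^{(i)}|+\log|A^{(i)}|$ plus a boundary term, and the boundary term must match up with the $\log d$ lower bound so that the two estimates combine to give $\CC\ge\sum_\ell\log|X_\ell|$. The key is to choose the right subsystem to track: I expect one wants the entropy of everything held by the user plus everything still entangled with the references, and to show each server application $\cE^{(i)}$ and each user application $\cD^{(i)}$ changes this quantity by at most the log-dimension of the single register that crosses the channel in that round.

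The main obstacle will be the telescoping/chain-rule step itself, not the orthogonality step. In the blind setting the server's operations $\cE^{(i)}$ are forced to be independent of the message states, which is what makes the uniform-mixture trick and the transfer unitaries work; but because the protocol is interactive and the private registers $T^{(i)}$ persist and get re-used across rounds, I have to be careful that applying property~$(d)$ at round $i$ really bounds the entropy change by $\log|A^{(i)}|$ (for a server-to-user message) or $\log|Q^{(i)}|$ (for a user-to-server message) and does not accidentally double-count register contents or lose the orthogonality-derived lower bound. Getting the bookkeeping of which systems are traced out, which are held, and which are in transit correct at each of the $2\vR$ interaction steps — so that the inductive bound closes and the boundary terms cancel against $\log d$ — is where the real work lies.
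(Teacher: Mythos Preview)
Your telescoping/chain-rule instinct for the upper bound is right and matches the paper: the paper proves as a separate lemma that $H(A^{(i)})+H(Q^{(i+1)})\ge H(T^{(i+1)})-H(T^{(i)})$, sums this over $i$, and uses $H(Q^{(1)})=H(T^{(1)})$ to collapse the boundary term, yielding $\CC\ge H(A^{(\vR)})+H(T^{(\vR)})\ge H(A^{(\vR)}T^{(\vR)})=H(R_{[\vF]}S^{(\vR)})$.

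Where your plan diverges---and has a real gap---is the lower-bound half. You propose to reproduce the orthogonality argument of Theorem~\ref{theo:one} by first proving a multi-round Lemma~\ref{lemm:pruni}, claiming that Proposition~\ref{prop:pure_uni} gives a unitary $\sU_{T^{(i)}}^{k\to\ell}$ on $T^{(i)}$ that is \emph{independent of the message states}. This independence is exactly what made the one-round orthogonality transfer work, and it holds there because the state on $Q^{(1)}\otimes T^{(1)}$ is prepared by the user before any interaction and hence does not see the messages. For $i>1$ this fails: the state on $T^{(i)}$ already depends on the messages (the user has received answers), so the purifying unitary on $T^{(i)}$ coming from \eqref{eq:sec2cc} depends on $\bm{\rho}$ as well. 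With message-dependent transfer unitaries $V_k^{\bm{s}}\ne V_k^{\bm{s}'}$, the implication ``$\rho^{k,\bm{s}}\perp\rho^{k,\bm{s}'}\Rightarrow\rho^{1,\bm{s}}\perp\rho^{1,\bm{s}'}$'' no longer follows, and the orthogonality of the family \eqref{eq:states_orth} cannot be established this way.

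The paper does not attempt to rescue the orthogonality argument. Instead, after reaching $H(R_{[\vF]}S^{(\vR)})$, it proves as its second lemma that $H(R_{[\vF]}S^{(\vR)})=H(R_{[\vF]})+H(S^{(\vR)})$. The argument uses correctness on the purified messages (so $R_k$ ends up in a pure state with $Y$ and is therefore independent of $S^{(\vR)}$) together with the secrecy condition \eqref{eq:sec2cc} (which makes the joint state on $R_{[\vF]}\otimes S^{(\vR)}$ independent of $k$), and then peels off the $R_k$'s one by one. From there $H(R_{[\vF]})=H(X_{[\vF]})$ and maximizing over inputs gives $\sum_\ell\log|X_\ell|$. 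This replaces your orthogonality step entirely and is where secrecy is actually used in the multi-round proof; the chain-rule lemma itself uses no secrecy at all.
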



For the proof of Theorem \ref{theo:multiround},
	we prepare the following lemmas.
\begin{lemm} \label{lemm:recurs}
$H(A^{(i)}) +  H(Q^{(i+1)})  \geq H(T^{(i+1)} ) - H(T^{(i)})$.
\end{lemm}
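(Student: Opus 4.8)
The plan is to reduce the statement to a short chain of standard entropy inequalities, where the only protocol-specific input is the observation that the user's query unitary $\cD^{(i)}$ acts as a unitary from $A^{(i)}\otimes T^{(i)}$ onto $Q^{(i+1)}\otimes T^{(i+1)}$. Since the whole protocol is built from unitaries applied to a globally pure state (the purified messages together with the user's and server's registers), every marginal entropy appearing in the statement is a well-defined von Neumann entropy. Moreover, the server never touches $T^{(i)}$, so the reduced state on $A^{(i)}\otimes T^{(i)}$ just before $\cD^{(i)}$ is mapped unitarily to the reduced state on $Q^{(i+1)}\otimes T^{(i+1)}$ just after it.

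First I would record the entropy preservation coming from this unitary. By property $(c)$,
\begin{align}
H(A^{(i)} T^{(i)}) = H(Q^{(i+1)} T^{(i+1)}).
\end{align}
Next I would lower-bound the target term $H(T^{(i+1)})$ by applying the Araki--Lieb-type inequality $(d)$ with $X = Q^{(i+1)}$ and $Y = T^{(i+1)}$, giving $H(Q^{(i+1)} T^{(i+1)}) + H(Q^{(i+1)}) \geq H(T^{(i+1)})$. Substituting the previous equality yields $H(T^{(i+1)}) \leq H(A^{(i)} T^{(i)}) + H(Q^{(i+1)})$. Finally I would use subadditivity, $H(A^{(i)} T^{(i)}) \leq H(A^{(i)}) + H(T^{(i)})$, to obtain
\begin{align}
H(T^{(i+1)}) \leq H(A^{(i)}) + H(T^{(i)}) + H(Q^{(i+1)}),
\end{align}
and rearranging gives exactly the claimed $H(A^{(i)}) + H(Q^{(i+1)}) \geq H(T^{(i+1)}) - H(T^{(i)})$.

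The only non-cosmetic issue is that subadditivity is not listed among the named properties $(a)$--$(e)$; since the excerpt's own derivation of $(d)$ already invokes it, I would either use it as a standard fact or, to stay strictly within the list, derive it in one line by purifying $A^{(i)}T^{(i)}$ into a pure state on $A^{(i)}T^{(i)}Z$, using $(a)$ to rewrite $H(A^{(i)}T^{(i)}) = H(Z)$ and $H(A^{(i)}) = H(T^{(i)}Z)$, and then applying $(d)$ to $T^{(i)}$ and $Z$. Beyond this, the only thing to get right is the bookkeeping: confirming that $H(A^{(i)})$ and $H(T^{(i)})$ are evaluated on the same global state on which $\cD^{(i)}$ acts, which holds because $A^{(i)}$ is consumed and $T^{(i)}$ is left untouched before $\cD^{(i)}$ is applied. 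No estimate is lossy in a subtle way, so I do not anticipate a genuine obstacle here—the lemma is essentially a structural consequence of unitarity together with two standard inequalities.
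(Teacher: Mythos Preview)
Your proof is correct and essentially identical to the paper's: the paper runs the same three steps in the order subadditivity $\to$ unitarity of $\cD^{(i)}$ $\to$ property $(d)$, writing the chain as $H(A^{(i)}) + H(T^{(i)}) + H(Q^{(i+1)}) \geq H(A^{(i)}T^{(i)}) + H(Q^{(i+1)}) = H(Q^{(i+1)}T^{(i+1)}) + H(Q^{(i+1)}) \geq H(T^{(i+1)})$. Your observation about subadditivity is well taken; the paper simply labels that step with $(b)$, so your one-line derivation via purification and $(d)$ is, if anything, more careful than the original.
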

\begin{proof}
Lemma \ref{lemm:recurs} is shown by the relation
\begin{align*}
 &H(A^{(i)}) + H(T^{(i)}) +  H(Q^{(i+1)}) \\
 &\stackrel{\mathclap{(b)}}{\geq} H(A^{(i)} T^{(i)} ) +  H(Q^{(i+1)}) \\
 &\stackrel{\mathclap{(c)}}{=} H(Q^{(i+1)} T^{(i+1)} ) +  H(Q^{(i+1)}) \\
 &\stackrel{\mathclap{(d)}}{\ge} H(T^{(i+1)} ).
%
\end{align*}
\end{proof}


\begin{lemm} \label{lemm:prrrd}
$H( R_{[\vF]} S^{(\vR)}) = H(R_{[\vF]}) + H(S^{(\vR)}).$ 
\end{lemm}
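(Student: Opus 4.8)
The plan is to prove the stronger structural statement that the state on $S^{(\vR)}\otimes R_{[\vF]}$ is a product state $\rho_{S^{(\vR)}}\otimes\rho_{R_{[\vF]}}$; the claimed entropy identity then follows at once from property $(b)$. I would organize the argument in three steps.

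First I would show that the reduced state $\rho_{S^{(\vR)} R_{[\vF]}}$ does not depend on the target index $k$. In the blind setting every server map $\cE^{(i)}$ and every user map $\cD^{(i)}$ is a fixed unitary that never acts on $R_{[\vF]}$, so the whole protocol realizes a single message-independent isometry into which the message enters linearly and $k$ enters only through the initial register $|k\rangle$. Hence the assignment sending the message on $X_{[\vF]}$ to the server's final register $S^{(\vR)}$ is a fixed CPTP map $\cN_k$, and a short induction using the secrecy condition \eqref{eq:sec2cc} (apply $\cE^{(i)}$ to the $k$-independent state $\rho_{S^{(i-1)}Q^{(i)}}(k)$ and trace out $A^{(i)}$) shows that $\rho_{S^{(i)}}$, and therefore $\cN_k$, is independent of $k$ for each fixed message. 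Since secrecy is demanded for \emph{all} message states, $\cN_k=\cN_{k'}$ as channels; applying this common channel to the $X_{[\vF]}$-halves of the purifications $\bigotimes_\ell|\psi_\ell\rangle_{X_\ell R_\ell}$ shows that $\rho_{S^{(\vR)} R_{[\vF]}}$ is the same for every $k$.

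Second I would use correctness to decouple one reference at a time. For target $\ell$ the global state after the last operation is pure on $S^{(\vR)}\otimes Y\otimes E\otimes R_{[\vF]}$, and correctness forces the output on $Y\otimes R_\ell$ to be the pure state $|\psi_\ell\rangle$. Purity of $|\psi_\ell\rangle$ means that $YR_\ell$ factorizes from the remainder of the global pure state, so $R_\ell$ is in a product state with $(S^{(\vR)},R_{[\vF]\setminus\{\ell\}})$; in particular $R_\ell$ is decoupled from $S^{(\vR)}$ in the run with target $\ell$. Third I would combine these: by Step~1 the state $\rho_{S^{(\vR)} R_{[\vF]}}$ is one and the same state $\sigma$ for every target, and by Step~2 that single $\sigma$ has $R_\ell$ decoupled from $(S^{(\vR)},R_{[\vF]\setminus\{\ell\}})$ for every $\ell$. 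Peeling the references off one by one (monotonicity of correlations under partial trace lets me pass from decoupling against the full remainder to decoupling against any subset), together with the fact that the messages, hence the references, start in a product state, yields $\sigma=\rho_{S^{(\vR)}}\otimes\bigotimes_\ell\rho_{R_\ell}=\rho_{S^{(\vR)}}\otimes\rho_{R_{[\vF]}}$, and property $(b)$ gives the claim.

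I expect Step~1 to be the crux and the main obstacle. Correctness alone controls only the target reference, while the secrecy condition as written constrains only the $R$-traced server view $\rho_{S^{(i-1)}Q^{(i)}}$, which by itself does not pin down the correlations between $S^{(\vR)}$ and $R_{[\vF]}$: a protocol in which the server silently retains the non-target message can satisfy secrecy for a single fixed message yet violate the identity. The resolution, and the point where care is needed, is that secrecy must hold for \emph{every} message; this upgrades ``the $R$-traced server state is $k$-independent'' to ``the message-to-server channel $\cN_k$ is $k$-independent,'' which is exactly what makes its Choi state $\rho_{S^{(\vR)} R_{[\vF]}}$ target-independent. This is also where blindness is essential, since in the visible setting the server's operations may depend nonlinearly on the message description and the channel picture breaks down.
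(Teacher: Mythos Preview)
Your proposal is correct and follows essentially the same logic as the paper's proof: both use correctness to decouple $R_k$ from $S^{(\vR)} R_{[\vF]\setminus\{k\}}$ when $k$ is the target, use secrecy to conclude that $\rho_{S^{(\vR)} R_{[\vF]}}$ is target-independent, and then peel off the references one at a time. The paper works directly with entropies rather than first establishing the full product structure, and it simply asserts the $k$-independence of $\rho_{S^{(\vR)} R_{[\vF]}}$ without the channel argument you carefully spell out in Step~1, but otherwise the two proofs coincide.
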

\begin{proof}
%
Given the user's input $k$, the state on $R_k\otimes Y$ is a pure state, 
and therefore,
	$R_k$ is independent of any system except for $Y$.
Thus, we have
\begin{align}	
H( R_{[\vF]} S^{(\vR)} ) =  H( R_{[\vF]\setminus \{k\} } S^{(\vR)}) + H(R_k)  \label{eq:222f}
\end{align}
	for any $k$.
Note that the state on $R_{[\vF]} \otimes S^{(\vR)}$ does not depend on $k$ due to the secrecy condition \eqref{eq:sec2cc}.
Thus, applying \eqref{eq:222f} recursively for all $k$,
	we have
	\begin{align}
	H( R_{[\vF]} S^{(\vR)} ) 
		&=  H(R_{\vF})  + H( R_{[\vF-1] } S^{(\vR)}) \\
		&=  H(R_{\vF-1}R_{\vF}) + H( R_{[\vF-2] } S^{(\vR)})  
		= \cdots	\\
		&= H(R_{[\vF]}) + H(S^{(\vR)}). 
	\end{align}
\end{proof}

\begin{proof}[Proof of Theorem~\ref{theo:multiround}]
From Lemmas~\ref{lemm:recurs} and \ref{lemm:prrrd}, we derive the following inequalities:
\begin{align}
	&\CC \geq \sum_{i=1}^{\vR} H(A^{(i)}) + H(Q^{(i)})\\
	 &= H(A^{(\vR)}) + H(Q^{(1)}) + \sum_{i=1}^{\vR-1} H(A^{(i)}) + H(Q^{(i+1)})	\\
	 &\geq H(A^{(\vR)}) + H(Q^{(1)}) + H(T^{(\vR)}) - H(T^{(1)}) \label{eq:lemmapp}\\
	 &\stackrel{\mathclap{(a)}}{=} H(A^{(\vR)}) + H(T^{(\vR)}) \\ 
	 &\stackrel{\mathclap{(b)}}{\geq} H(A^{(\vR)} T^{(\vR)}) \stackrel{\mathclap{(a)}}{=} H(R_{[\vF]} S^{(\vR)}) \\
	 &= H(R_{[\vF]}) +  H(S^{(\vR)}) \label{eq:samewa} \\
	 &\geq H(R_{[\vF]}) = H(X_{[\vF]}),
\end{align}
where \eqref{eq:lemmapp} is obtained by applying Lemma~\ref{lemm:recurs} for all $i=1,\ldots, \vR-1$,
and \eqref{eq:samewa} is from Lemma~\ref{lemm:prrrd}.
Taking the maximum of $H(X_{[\vF]})$ over all states $\rho_1,\ldots,\rho_{\vF}$,
	we obtain $\CC \geq  \sum_{\ell=1}^{\vF} \log |X_\ell|$.
\end{proof}

\section{QPIR Protocol with Prior Entanglement in Blind Setting} \label{sec:ent}
In the previous section, we proved that the trivial solution is optimal for one-server QPIR.
In this section, we show that if we allow shared entanglement between the user and the servers,
	we can construct a QPIR protocol with lower communication complexity than the trivial solution.


Let $\vM = \sum_{\ell=1}^{\vF} \log |X_\ell|$ be the size of all messages.
To measure the amount of the prior entanglement,
	we count sharing one copy of $|\sI_2\rrangle = (1/\sqrt{2})(|00\rangle + |11\rangle)$ as an {\em ebit}. 
Accordingly, we count sharing the state $|\sI_d\rrangle \in \mathbb{C}^d \otimes \mathbb{C}^d$ as $\log d$ ebits.

%
%

\begin{theo} \label{theo:prot} 
Suppose there exists a C-QPIR protocol with communication complexity $O(f(\vM))$
	when $O(g(\vM))$-ebit prior entanglement is shared between the user and the server.
Then, in the blind setting, there exists a QPIR protocol for quantum messages with communication complexity $O(f(\vM))$
	when $O(\vM+g(\vM))$-ebit prior entanglement is shared between the user and the server.
\end{theo}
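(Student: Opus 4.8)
The plan is to reduce quantum message retrieval to classical message retrieval by \emph{teleporting} every message state to the user and then using the assumed C-QPIR protocol to privately deliver only the teleportation correction that the user actually needs. Before the protocol starts, in addition to the $O(g(\vM))$ ebits consumed by the C-QPIR protocol, I would have the user and server pre-share one maximally entangled state $|\sI_{|X_\ell|}\rrangle$ on $A_\ell\otimes B_\ell$ for each message $\ell\in[\vF]$, with the user holding $A_\ell$ and the server holding $B_\ell$. These teleportation resources cost $\sum_{\ell=1}^{\vF}\log|X_\ell|=\vM$ ebits, which together with the C-QPIR entanglement gives the claimed $O(\vM+g(\vM))$-ebit budget.

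The protocol then proceeds in three stages. First, the server performs, for every $\ell$, the generalized Bell measurement $\mathbf{M}_{\sX\sZ,|X_\ell|}$ on $X_\ell\otimes B_\ell$, obtaining a classical outcome $m_\ell=(a_\ell,b_\ell)\in[0:|X_\ell|-1]^2$. By the teleportation identity \eqref{qe:feafterf}, conditioned on outcome $m_\ell$ the user's system $A_\ell$ now carries the message state $\rho_\ell$ up to a known Pauli correction determined by $m_\ell$, with any entanglement with a reference system preserved, since the measurement acts only on the server side. This step is carried out identically for all $\ell$ and needs no description of the states, so it is legitimate in the blind setting; afterwards the server retains only the classical strings $m_1,\ldots,m_{\vF}$, of total length $2\vM$ bits. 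Second, I would invoke the assumed C-QPIR protocol with $m_1,\ldots,m_{\vF}$ as the classical database to let the user privately retrieve $m_k$ for its target index $k$; since the database has total size $2\vM=O(\vM)$, this costs communication $O(f(\vM))$ and $O(g(\vM))$ ebits, and no further quantum communication is needed. Third, the user, now knowing $m_k$, applies the corresponding inverse Pauli to its retained system $A_k$ and outputs the result.

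It remains to verify the two QPIR requirements. Correctness is immediate: teleportation followed by the correct Pauli undo reconstructs $\rho_k$ exactly, and C-QPIR correctness guarantees that the user recovers the right value $m_k$. The step I expect to be the main obstacle is user secrecy, namely arguing that no information about $k$ leaks to the server beyond what the underlying C-QPIR protocol leaks. The argument is that the server's teleportation measurements are performed on all systems in a manner independent of $k$, producing a database whose distribution does not depend on $k$; the only subsequent interaction is the C-QPIR invocation, whose secrecy guarantee holds for any fixed database and therefore carries over verbatim. Hence the constructed protocol inherits the user secrecy of the C-QPIR protocol. The one subtlety to handle carefully in the write-up is the bookkeeping showing that the classical value $m_k$ retrieved by C-QPIR is exactly the outcome correlated with the quantum system $A_k$ the user has kept, so that the local correction applied at the end is the correct one.
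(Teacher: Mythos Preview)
Your proposal is correct and follows essentially the same approach as the paper: pre-share one maximally entangled pair per message, have the server teleport every message to the user via Bell measurements, and then use the assumed C-QPIR protocol to privately deliver the single classical correction string $m_k$ the user needs. The paper's correctness and secrecy arguments are likewise the same as yours, noting that the only interaction depending on $k$ is the C-QPIR invocation, so user secrecy is inherited directly.
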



The protocol satisfying Theorem \ref{theo:prot} is a simple combination of quantum teleportation \cite{BBCJPW93} and any C-QPIR protocol.
For the description of the protocol, we use the generalized Pauli operators and maximally entangled state for $d$-dimensional systems defined in \eqref{eq:defs_pauli}.
We construct the QPIR protocol satisfying Theorem \ref{theo:prot} as follows.
\begin{prot} \label{prot:ent}
Let $\Phi_{\mathrm{cl}}$ be a C-QPIR protocol 
	and $d_1,\ldots, d_{\vF}$ be the size of the $\vF$ classical messages. 
%
From this protocol, we construct a QPIR protocol for quantum messages as follows.

Let $X_1,\ldots, X_\vF$ be the quantum systems with dimensions $d_1,\ldots, d_{\vF}$, respectively,
	and 
 $\rho_1,\ldots, \rho_{\vF}$ be the quantum message states on systems $X_1,\ldots, X_\vF$.
The user and the server share the maximally entangled states $|\sI_{d_\ell}\rrangle$, defined in \eqref{eq:defs_pauli}, on ${Y_\ell\otimes Y_\ell'}$ for all $\ell\in[\vF]$,
	where $Y_{[\vF]}$ and $Y_{[\vF]}'$ are possessed by the user and the server, respectively.
	
The user and the server perform the following steps.
\begin{enumerate}[leftmargin=1.5em]
\item 
		For all $\ell\in[\vF]$, the server performs the generalized Bell measurement
		$\mathbf{M}_{\sX\sZ,d_\ell}$, defined in \eqref{mes},
		on $X_\ell\otimes Y_\ell'$, where the measurement outcome is written as $m_\ell = (a_\ell, b_\ell)\in [0:  d_\ell-1]^2$.

\item 
	The user and the server perform the C-QPIR protocol $\Phi_{\mathrm{cl}}$ to retrieve $m_k = (a_k,b_k)$.
%
\item The user recovers the $k$-th message $\rho_k$ by applying $\sX_{d_k}^{-a_k}\sZ_{d_k}^{b_k} $ on $Y_k$.
\QEDA
\end{enumerate}
\end{prot}
The correctness of the protocol is guaranteed by correctness of the teleportation protocol and the C-QPIR protocol $\Phi_{\mathrm{cl}}$.
When the $\ell$-th message state is prepared as $\rho_\ell$ and its purification $|\phi_\ell\rangle$ is denoted with the reference system $R_\ell$,
	after Step~1, the states on $R_\ell\otimes Y_\ell$ is 
	\begin{align}
	( \sI\otimes  \sX_{d_\ell}^{a_\ell} \sZ_{d_\ell}^{-b_\ell}) |\phi_\ell\rangle
	\end{align}
for all $\ell\in[\vF]$.
Thus after Step~3, the target state $|{\phi_k}\rangle$ is recovered in $R_k\otimes Y_k$.


To analyze the secrecy of Protocol~\ref{prot:ent}, 
	note that only Step~2 has the communication between the user and the server.
Thus the secrecy of Protocol~\ref{prot:ent} is guaranteed by the secrecy of the underlying protocol $\Phi_{\mathrm{cl}}$.
In the honest server model, Kerenidis et al. \cite{KLLGR16} proposed a C-QPIR protocol with communication complexity $O(\log\vM)$ and prior entanglement $O(\vM)$.
Therefore, we obtain the following corollary. 
\begin{coro}
On the honest server model in the blind setting,
	there exists a QPIR protocol for quantum messages with communication complexity $O(\log\vM)$
	and prior entanglement $O(\vM)$.
\end{coro}

One property of Protocol~\ref{prot:ent} is that all other states in the server are destroyed at Step~1.
This is a disadvantage for the server but an advantage for the user since the user can retrieve other states $\rho_\ell$ by retrieving classical information $m_\ell \in [0:d_{\ell}-1]^2$.

\begin{table*}[ht]
\begin{center}
\caption{Cost of Two-Server QSPIR Protocols in Visible Setting} \label{tab:vis}
\begin{tabular}{|c|c|c|c|c|}
\hline
						&	Message States				&	Classical Communication	&	Quantum Communication	&	Prior Entanglement		\\
\hline
Protocol~\ref{prot:npe}	&	Pure qubit states 		&	$2\vF$ bits			&	$8$ qubits			&	$4$ ebits	\\
\hline
Protocol~\ref{prot:dlev}&	Pure qudit states		&	$2\vF$ bits		&	$4d^d\log d $ qubits	&	$2d^d \log d$ ebits	\\
\hline
Protocol~\ref{prot:wpe}	&	Commutative unitary		&	$2\vF$ bits				&	$4\log d$ qubits			&	$2\log d$ ebits\\
\hline
\end{tabular}
\end{center}
\end{table*}

\section{Two-Server Symmetric QPIR Protocols in Visible Setting} \label{sec:visible}

In this section, we propose two-server one-round QSPIR protocols with classical query in the visible setting.
%
%
In the multi-server model, the servers cannot communicate with each other.
In the visible setting, the server has the description of quantum states instead of the states. 
With a description of a state $\rho$, the server may generate multiple copies of $\rho$, without limitation of the no-cloning theorem, and apply quantum operations depending on the description of $\rho$.
Our protocols in this section are symmetric QPIR (QSPIR) in which the user only obtains the information of the targeted message state.
Thus, our protocols prove that the QSPIR is possible in the visible setting.

\subsection{Definition and main theorems}

We propose three QSPIR protocols in the visible setting for pure states.
The communication complexity of the protocols are summarized in Table~\ref{tab:vis}.
{Even if the QPIR's trivial solution of downloading all states is not a QSPIR protocol, 
	we can evaluate the efficiency of our QSPIR protocols by comparison with the communication complexities of the QPIR's trivial solution.}
Compared to the communication complexity of the QPIR's trivial solution is $\vF\log d$ qubits,
	the quantum communication complexities of our protocols do not scale with the number of messages $\vF$ 
		but scale with the dimension of the systems in which the message states are prepared.
Thus, when the number of messages $\vF$ is sufficiently greater than the dimension, 
	the proposed protocols are more efficient than the QPIR's trivial solution of downloading all messages.

Our QSPIR protocols are included in the class of $\vN$-server QSPIR protocols in the visible setting with the classical queries described as follows.
Let $\pureset{\mathbb{C}^d}$ be the set of pure states in $\mathbb{C}^d$.
We formally define a QSPIR protocol $\Phi$ for $\mathcal{P}\subset \pureset{\mathbb{C}^d}$.
The message states are given as arbitrary $\vF$ states 
	$\bm{\psi} = (|\psi_1\rangle , \ldots, |\psi_{\vF}\rangle) \in \mathcal{P}^{\vF}$
and each server has the descriptions of all messages states. 
The user chooses the index of the targeted message $K\in[\vF]$, i.e., $|\psi_k\rangle$ is the targeted quantum state when $K=k$.
A protocol $\Phi$ constructed by the following steps.
\begin{enumerate}[leftmargin=1.5em]
\item \textbf{Query}: The user randomly encodes $K$ as classical queries $Q=(Q_1,\ldots, Q_{\vN}) \in \cQ_1\times \cdots \times \cQ_{\vN}$ and sends $Q_j$ to the $j$-th server.
That is, 
	the variable $Q$ is subject to the conditional distribution $p_{Q|K=k}$, which is chosen by the user.
	
\item \textbf{Entanglement Sharing}:
	Independently of the queries, the servers share an entangled state $\rho_{\mathrm{init}}$ on $S_1\otimes \cdots \otimes S_{\vN}$,
		where $S_j$ is contained in the $j$-th server.
		
\item \textbf{Answer}:
	For all $j\in[\vN]$,
		the $j$-th server applies a CPTP map  $\cE_j[\bm{\psi},q_j]$ from $S_j$ to $A_j$
		and sends $A_j$ to the user when $Q_j=q_j$.
		
\item \textbf{Reconstruction}:
	{
	Let $A\coloneqq A_1\otimes \cdots \otimes A_{\vN}$ and decompose $A = Y \otimes Y'$.
	The user performs a measurement by a POVM $\{\sM, \sI-\sM\}$ on $Y \subset A$.
	If $\sM$ is measured, the targeted state $|\psi_k\rangle$ is recovered correctly in $Y'$.
	If the reconstruction fails, i.e. $\sI-\sM$ is measured, repeat from Step 2.}
	
\end{enumerate}
The protocol $\Phi$ should satisfy the following conditions.
\begin{itemize}[leftmargin=1.5em]
\item \textbf{Correctness}: At each execution of Step 4, 
	the user recovers the targeted state $|\psi_k\rangle$ with positive probability $p$.
	
\item \textbf{User secrecy}: 
	The user secrecy is 
		\begin{align}
		I( Q_j ; K ) = 0 \quad \forall j
		\end{align}
		for any distribution of $K$.

\item \textbf{Server secrecy}: 
	Let $\rho_{A} (k, \bm{\psi})$ be the state on $A$ with the target index $k$ and the message states 
	$\bm{\psi} = (|\psi_1\rangle , \ldots, |\psi_{\vF}\rangle)$.
	The server secrecy is 
		\begin{align*}
		&\rho_{A}(k, \bm{\psi}) 
		=
		\rho_{A}(k, \bm{\phi})
		\end{align*}
		for {any $k\in[\vF]$ and} any states $\bm{\psi} = (|\psi_1\rangle , \ldots, |\psi_{\vF}\rangle), \bm{\phi} = (|\phi_1\rangle , \ldots, |\phi_{\vF}\rangle) \in \cP^{\vF}$ such that $|\psi_k\rangle = |\phi_{k}\rangle$.
%
\end{itemize}
The average classical and quantum communication complexities are defined as
\begin{align}
\CC_{c} &= \sum_{j=1}^{\vN} \log |\cQ_j|, \\
\CC_{q} &= \frac{1}{p}\sum_{j=1}^{\vN} \log |A_j|,
\end{align}
since the classical queries are uploaded once and 
	the quantum answers are downloaded $1/p$ times on average.

Our protocols will be defined for $\vN=2$ servers with the maximally entanglement states as the prior entanglement.
The prior entanglement can be replaced by user's upload of the entangled states.
However, we state our results with the prior entanglement because the entangled states can be prepared without any user's operation.

We prove the following two theorems.

\begin{theo}	\label{theo:31}
In the visible setting,
	there exists a QSPIR protocol for $\pureset{\mathbb{C}^2}$ 
	with $2\vF$-bit average classical communication,
	$8$-qubit average quantum communication,
	and
	$4$-ebit average prior entanglement.
\end{theo}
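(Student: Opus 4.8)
The plan is to delegate all state preparation to the two servers and hand the target back to the user by entanglement swapping, while hiding $k$ with the classical two‑server PIR query. In the visible setting each server knows a preparation unitary with $\sU_\ell|0\rangle=|\psi_\ell\rangle$, and for a qubit I would fix the decomposition $\sU_\ell=\mathsf{P}(\phi_\ell)\,\sR(\theta_\ell)$, where $\sR(\theta)$ is the real ($Y$‑axis) rotation satisfying $\sR(\theta)^{\top}=\sR(-\theta)$ and $\sR(\theta)|0\rangle=\cos\frac{\theta}{2}|0\rangle+\sin\frac{\theta}{2}|1\rangle$, and $\mathsf{P}(\phi)=\mathrm{diag}(1,e^{\ii\phi})$, so that $|\psi_\ell\rangle=\cos\frac{\theta_\ell}{2}|0\rangle+e^{\ii\phi_\ell}\sin\frac{\theta_\ell}{2}|1\rangle$. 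The structural fact I would exploit is that the rotations $\{\sR(\theta_\ell)\}_\ell$ commute among themselves and the phases $\{\mathsf{P}(\phi_\ell)\}_\ell$ commute among themselves; hence each family sums additively, and I would encode the two families on two separate Bell pairs, to be multiplied together by the user.

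Concretely, the user picks a uniform $S\subseteq[\vF]$ and sends $Q_1=S$ to server~$1$ and $Q_2=S_2:=S\triangle\{k\}$ to server~$2$ ($2\vF$ classical bits); uniformity of each marginal gives user secrecy $I(Q_j;K)=0$ at once. The servers pre‑share $|\sI\rrangle_{R_1R_2}$ and $|\sI\rrangle_{Z_1Z_2}$ (server~$1$ holding $R_1,Z_1$). Writing $\theta_S=\sum_{i\in S}\theta_i$ and likewise for $\phi$, server~$1$ applies $\sR(\theta_{S})$ to $R_1$ and $\mathsf{P}(\phi_{S})$ to $Z_1$, while server~$2$ applies $\sR(\theta_{S_2})$ to $R_2$ and $\mathsf{P}(-\phi_{S_2})$ to $Z_2$; both then send their two qubits. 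Using $(\sB\otimes\sC^{\top})|\sI\rrangle=|\sB\sC\rrangle$ together with $\sR(\theta)^{\top}=\sR(-\theta)$ and $\mathsf{P}(\phi)^{\top}=\mathsf{P}(\phi)$, the subset sums telescope and the received four‑qubit state is exactly $|\sR(\epsilon_k\theta_k)\rrangle\otimes|\mathsf{P}(\epsilon_k\phi_k)\rrangle$ with $\epsilon_k=+1$ if $k\in S$ and $-1$ otherwise. Since this depends only on $(\theta_k,\phi_k)$ — i.e. on $|\psi_k\rangle$ and the publicly known $\epsilon_k$ — server secrecy is immediate, which is the whole point of routing every message through the subset sums. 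The user, knowing $\epsilon_k$, repairs the $\epsilon_k=-1$ case by $\sZ\otimes\sZ$ on the rotation pair (which sends $\sR(-\theta_k)$ to $\sR(\theta_k)$ exactly) and $\sX\otimes\sX$ on the phase pair (which sends $\mathsf{P}(-\phi_k)$ to $e^{-\ii\phi_k}\mathsf{P}(\phi_k)$, i.e. up to an irrelevant global phase).

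To multiply the two gates in the order $\mathsf{P}(\phi_k)\sR(\theta_k)$ I would apply the generalized Bell measurement $\mathbf{M}_{\sX\sZ}$ to $(Z_2,R_1)$; by \eqref{qe:feafterf} this leaves $|\mathsf{P}(\phi_k)\sX^a\sZ^{-b}\sR(-\theta_k)\rrangle$ on $(Z_1,R_2)$, after which I measure $R_2$ in the computational basis with outcome $r$. Pushing the byproduct through the rotation with $\sX\sR(\alpha)\sX=\sZ\sR(\alpha)\sZ=\sR(-\alpha)$ shows that the qubit left on $Z_1$ equals $|\psi_k\rangle$, up to a $\sZ$ correction fixed by the parity of $a+b$, exactly when $r=a$, and otherwise carries an uncorrectable conjugate phase $e^{-\ii\phi_k}$. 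Since $a$ is uniform and the reduced state on $R_2$ is maximally mixed, $\Pr[r=a]=1/2$, so the protocol succeeds with probability $p=1/2$; declaring $r\neq a$ a failure and repeating from the entanglement‑sharing step turns the per‑attempt cost of $2$ ebits and $4$ communicated qubits into the claimed $4$‑ebit and $8$‑qubit averages, while the classical cost stays $2\vF$ bits because the queries are not resent.

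The genuine difficulty is byproduct bookkeeping rather than the combining identity: two independent sign problems interfere, namely the index‑dependent $\epsilon_k$ from the symmetric‑difference query and the Pauli $\sX^a\sZ^{-b}$ from the swap. The observation that unlocks the qubit case is that both an $\epsilon_k=-1$ flip and an $\sX$ byproduct convert $\mathsf{P}(\phi)$ into $\mathsf{P}(-\phi)$ only up to a global phase, so each is harmless on a bare state but fatal once it is sandwiched between the phase and the rotation — which is precisely why only the outcomes with $r=a$ are recoverable and the protocol is inherently probabilistic. I would therefore devote the bulk of the write‑up to verifying these commutation identities and to checking that the four‑qubit state delivered to the user is independent of the non‑target messages, since that independence is exactly the server‑secrecy guarantee.
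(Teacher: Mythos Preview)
Your construction is correct and is essentially the same as the paper's Protocol~\ref{prot:npe}: the same two-server subset query, two Bell pairs carrying the rotation and phase separately, server-side subset-sum encoding, and a Bell measurement followed by a computational-basis measurement giving success probability $1/2$ and hence the stated averages. The only cosmetic difference is bookkeeping of the Pauli byproduct---the paper rewrites $\sX^a\sZ^{-b}$ as $\sY^a\sZ^{-a-b}$ and uses $[\sY,\sR(\theta)]=0$ to remove $\sY^a$ deterministically (leaving the $q_k$-sign to be resolved by which basis outcome counts as success), whereas you first repair the $\epsilon_k$-sign with local Paulis and then declare success exactly when $r=a$; these are equivalent since your identities $\sX\sR\sX=\sZ\sR\sZ=\sR^{-1}$ are just a restatement of $\sY\sR=\sR\sY$.
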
 

\begin{theo}	\label{theo:32}
Let $d\geq 2$.
In the visible setting,
	there exists a QSPIR protocol for $\pureset{\mathbb{C}^d}$ 
	with 
	$2\vF$-bit average classical communication,
	$4d^d\log d$-qubit average communication,
	and $2d^d\log d$-ebit average prior entanglement.
\end{theo}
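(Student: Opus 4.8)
\begin{proofidea}
The plan is to generalise the two-server construction behind Theorem~\ref{theo:31} from qubits to qudits. In the visible setting each server can compute, from the descriptions $\bm{\psi}$, a state-preparation unitary $\sU_\ell\in\U{d}$ with $\sU_\ell|0\rangle=|\psi_\ell\rangle$, so it suffices to deliver the single state $\sU_k|0\rangle=|\psi_k\rangle$ to the user without revealing $k$ to either server and without leaking the columns of $\sU_\ell$ for $\ell\neq k$. First I would fix the classical query layer exactly as in ordinary two-server PIR: the user draws $Q_1\in\{0,1\}^{\vF}$ uniformly and sets $Q_2=Q_1\oplus e_k$, giving $\CC_c=2\vF$ bits and making each marginal $Q_j$ uniform, which already supplies the user secrecy $I(Q_j;K)=0$. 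The operational role of the two queries is that the servers' encoded operations are designed to telescope to an operation determined by the single index $k$, in the same way that the two XOR answers of classical PIR isolate the $k$-th entry.

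The transmission mechanism I would use is entanglement swapping through the identity~\eqref{qe:feafterf}. Writing $\sU_k=\Theta_k\,V_k$, where $\Theta_k$ is a diagonal phase-shift operator and $V_k$ is a product of $d-1$ two-level rotations (this is the decomposition into the $2(d-1)$ angles mentioned in the introduction), the servers apply the corresponding factors to their halves of shared copies of $|\sI_d\rrangle$, and a cascade of generalized Bell measurements $\mathbf{M}_{\sX\sZ,d}$, performed by the servers, chains these factors together. Using the relation $(\sB\otimes\sC^{\top})|\sA\rrangle=|\sB\sA\sC\rrangle$ together with~\eqref{qe:feafterf}, the cascade deposits into the user's register a state of the form $|\sU_k\rrangle$ up to generalized Pauli byproducts $\sX^{a}\sZ^{-b}$ generated at each swap. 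The user then measures $Y$ in the fixed computational basis; on the favourable outcome the register $Y'$ collapses to $|\psi_k\rangle$, and on the unfavourable outcome the protocol is restarted from the entanglement-sharing step, which is the source of the factor $1/p$ in $\CC_q$.

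The main obstacle, and the reason the qudit cost jumps to $d^d$, is the non-commutativity flagged in the introduction. For $d=2$ a single rotation interacts cleanly with the Pauli byproducts, so a constant number of entangled pairs suffices and each angle can simply be transmitted additively; for $d\ge 3$ a byproduct $\sX^{a}\sZ^{b}$ conjugates a two-level rotation into a rotation on \emph{cyclically shifted} levels, and since the rotations making up $V_k$ act on different level-pairs these byproducts cannot be commuted to the end and removed by a single fixed measurement. My plan is to have the servers precompute, for \emph{every} possible byproduct pattern, the correctly conjugated rotation data, so that whatever byproducts occur the chain still assembles a corrected $\sU_k$. Since each of the $d$ stages of the cascade can be shifted by any of $d$ amounts, there are $d^{d}$ patterns, which forces $2d^{d}\log d$ ebits of prior entanglement and the download of $2d^{d}$ qudits; with the residual phase $\Theta_k$ supplying a constant success probability $p=\tfrac12$, the averaged quantum communication is $\CC_q=(1/p)\sum_j\log|A_j|=4d^{d}\log d$ qubits, matching the claim.

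It then remains to verify the three protocol conditions. Correctness at each round holds by construction of the branch matching the realised byproduct pattern, combined with the entanglement-swapping identity. User secrecy is immediate from the uniform marginals of the PIR queries. The point to check with care is server secrecy: one must confirm that the state reaching the user before measurement depends on $\bm{\psi}$ only through $|\psi_k\rangle$, equivalently only through the first column of $\sU_k$ up to the unitary freedom of Proposition~\ref{prop:pure_uni}, so that the telescoping of the two servers' operations genuinely cancels all dependence on $|\psi_\ell\rangle$ for $\ell\neq k$. I expect this to follow from the same cancellation structure that isolates $\sU_k$ in the query layer, but it is exactly where the visible-setting freedom in choosing each $\sU_\ell$ must be pinned down.
\end{proofidea}
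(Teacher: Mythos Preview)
Your high-level plan is right: the classical PIR query layer $Q_2=Q_1\oplus e_k$ gives user secrecy and $2\vF$ bits, and the decomposition $\sU_k=\Ss(\bm{\varphi}_k)\Rr_1(\theta_k^1)\cdots\Rr_{d-1}(\theta_k^{d-1})$ together with entanglement swapping via~\eqref{qe:feafterf} is exactly how the paper proceeds. But the central step---how you cope with the non-commuting $\sX$ byproducts---is where your proposal diverges from the paper and, as written, does not yield a working protocol.

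First, the Bell measurements cannot be performed by the servers. Each swap acts on the primed halves of two distinct pairs, and after the servers' local encodings those primed halves sit at \emph{different} servers (Server~2 holds $A',B'$ in the paper's labelling), while the resulting state lives on the unprimed halves at the \emph{other} server. Since the servers cannot communicate, neither can learn the outcome needed to continue the chain. In the paper the servers simply send all qudits to the user, who performs every $\mathbf{M}_{\sX\sZ,d}$ herself.

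Second, your mechanism ``precompute all $d^{d}$ byproduct patterns'' is not well-defined. The byproducts arise \emph{during} the cascade from the user's own measurement outcomes; the servers cannot prepare branch-specific corrections in advance without knowing those outcomes, and sending all conjugated variants simultaneously does not compose under swapping into a single corrected chain. The paper's actual device is much simpler: at each swap the user \emph{post-selects} on the $\sX$ part of the outcome being $0$ (probability $1/d$), so that only a $\sZ^{\alpha}$ byproduct survives, and $\sZ^{\alpha}$ commutes with the diagonal $\Ss(\bm{\varphi}_k)$ and can be stripped at the end. With $d-1$ swaps the success probability is $1/d^{\,d-1}$; combined with the $d$ states per round this is the true origin of the $d^{d}$ factor. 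Your claimed $p=\tfrac12$ ``from the residual phase'' has no counterpart here. Finally, server secrecy in the paper is ensured by having the user also request the \emph{inverse} states $|\Ss(-\bm{\varphi}_k)\rrangle,|\Rr_s(-\theta_k^s)\rrangle$ in a symmetric pass, so that the number of requests is independent of $k$; this doubling is what gives the leading constant $4$ (respectively $2$) in the qubit (ebit) counts, and it is missing from your outline.
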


The protocols for pure states are constructed by the following idea.
Any pure state is written as $\sU|0\rangle$ with unitary matrices $\sU$ 
	which are decomposed by rotation operations and phase-shift operations.
With this fact, 
	the user requests the servers to apply certain rotation operations and phase-shift operations on bipartite entangled states so that the user can recover the targeted pure state after receiving the entangled state.
To guarantee the user secrecy, 
	we import to our protocols the same query structure of two-server classical PIR protocol by Chor et al. \cite{CGKS98}.

We denote by $\U{\mathbb{C}^d}$ the unitary group on $\mathbb{C}^d$
and define the set of pure states associated with $\U{\mathbb{C}^d}$ as 
	\begin{align}
	\mathcal{S}(\mathbb{C}^d) = \px*{ |\sU\rrangle \mid \sU \in \U{\mathbb{C}^d} }.
	\end{align}
For a subset $C$ of $\U{\mathbb{C}^d}$,
	we denote 
	\begin{align}
	\mathcal{S}(C) &\coloneqq \px*{ |\sU\rrangle \mid \sU \in C } \subset \pureset{\mathbb{C}^d\otimes \mathbb{C}^d},\\
	\mathcal{P}(C) &\coloneqq \px*{ \sU|0\rangle \mid \sU \in C } \subset \pureset{\mathbb{C}^d}.
	\end{align}
We can find the following relation between QSPIR protocols for $\mathcal{S}(C)$ and $\mathcal{P}(C)$.
\begin{prop}	\label{prop:chaa}
Let $C \subset \U{\mathbb{C}^d}$.
If there exists a QSPIR protocol for $\mathcal{S}(C)$ with communication complexity $c$,
	there exists a QSPIR protocol for $\mathcal{P}(C)$ which succeeds with probability $1/d$ 
	and 
	has the average communication complexity $cd$.
\end{prop}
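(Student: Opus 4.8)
The plan is to convert a QSPIR protocol for the "doubled" states $\mathcal{S}(C) = \{|\sU\rrangle \mid \sU \in C\}$ into one for $\mathcal{P}(C) = \{\sU|0\rangle \mid \sU \in C\}$ by exploiting the relation $(\sB\otimes \sC^{\top})|\sA\rrangle = |\sB\sA\sC\rrangle$ established in the preliminaries, together with a measurement that projects the first tensor factor onto $|0\rangle$. The key observation is that $|\sU\rrangle = \sum_{s,t} u_{st}|s\rangle|t\rangle$, and if we look at the component with the first system equal to $\langle 0|$, we extract $\sum_t u_{0t}|t\rangle$; more usefully, applying the transpose identity, the state $\sU|0\rangle = \sum_s u_{s0}|s\rangle$ corresponds to the "$0$-th column" of $\sU$, which can be read off from $|\sU\rrangle$ by measuring the \emph{second} system against $|0\rangle$. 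Concretely, I would run the given protocol for $\mathcal{S}(C)$ verbatim, having the servers treat the description of the targeted pure state $|\psi_k\rangle = \sU_k|0\rangle$ as specifying the unitary $\sU_k$ and thus the doubled state $|\sU_k\rrangle$ (the visible setting provides exactly this classical description).

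First I would set up the reduction: on input descriptions of the pure states $|\psi_\ell\rangle$, each server reinterprets them via a fixed (publicly agreed) choice of unitary $\sU_\ell$ with $\sU_\ell|0\rangle = |\psi_\ell\rangle$, and then executes the $\mathcal{S}(C)$-protocol as if the messages were $|\sU_1\rrangle,\ldots,|\sU_{\vF}\rrangle$. By correctness of that protocol, the user recovers $|\sU_k\rrangle$ in its output register. Next I would append a reconstruction step in which the user measures one of the two qudit factors of $|\sU_k\rrangle$ in the computational basis and post-selects on obtaining outcome $0$. By the structure $|\sU_k\rrangle = \sum_{s,t} (u_k)_{st}|s\rangle|t\rangle$, conditioning on the appropriate factor being $|0\rangle$ leaves the residual (unnormalized) state $\sum_s (u_k)_{s0}|s\rangle = \sU_k|0\rangle = |\psi_k\rangle$, whose squared norm is $\sum_s |(u_k)_{s0}|^2 = 1$... so I must instead measure to isolate a single column, which occurs with probability $1/d$ since $|\sU_k\rrangle$ has unit norm and the column $\sU_k|0\rangle$ carries a $1/\sqrt{d}$-type weight relative to the full doubled vector. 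This probability-$1/d$ success is precisely the claimed figure, and upon success the normalized post-measurement state is exactly $|\psi_k\rangle$.

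For the communication cost, I would note that the reduction adds no communication: the query structure and the server answers are identical to those of the $\mathcal{S}(C)$-protocol, and the extra computational-basis measurement is local to the user. However, because reconstruction succeeds only with probability $1/d$, the protocol must be repeated on average $d$ times (re-running from the entanglement-sharing step, matching the repeat structure already built into the protocol definition in Step~4), so the average quantum communication complexity is multiplied by $d$, giving $cd$ as claimed. User secrecy and server secrecy transfer immediately: the queries are unchanged, so $I(Q_j;K)=0$ still holds, and since the servers' operations depend only on $\sU_k$ (hence only on $|\psi_k\rangle$, as the choice $\sU_\ell \mapsto |\psi_\ell\rangle$ is fixed and public), the server-secrecy condition is inherited.

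The step I expect to be the main obstacle is verifying the success probability and the exact form of the post-measurement state: I need to confirm that measuring the correct qudit factor against $|0\rangle$ (as opposed to a Bell-type or other measurement) yields $|\psi_k\rangle = \sU_k|0\rangle$ rather than a row or a conjugated version, and that the normalization works out to success probability exactly $1/d$ uniformly over all $\sU_k \in C$. The transpose identity $(\sB\otimes \sC^{\top})|\sA\rrangle = |\sB\sA\sC\rrangle$ is the right tool here, and the uniform $1/d$ probability should follow because each of the $d$ columns of a unitary is a unit vector, so the $d$ possible single-column outcomes are equiprobable; I would make this precise and handle the bookkeeping of which tensor factor is measured and which is output.
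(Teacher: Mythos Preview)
Your proposal is correct and follows essentially the same approach as the paper: run the $\mathcal{S}(C)$-protocol to obtain $|\sU_k\rrangle$, then measure one qudit factor in the computational basis and post-select on outcome $0$, succeeding with probability $1/d$ and hence incurring an average factor-$d$ blow-up in communication. Your treatment is in fact more careful than the paper's in two respects: you flag the need to measure the \emph{second} factor (so that the residual state is the column $\sU_k|0\rangle$ rather than the row $\sU_k^{\top}|0\rangle$), and you explicitly verify that user and server secrecy are inherited---points the paper's short proof leaves implicit.
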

\begin{proof}
We construct a QSPIR protocol for $\mathcal{P}(C)$ as follows.
After applying the QSPIR protocol for $\mathcal{S}(C)$ whose output state is $|\sU_k\rrangle$ on $A\otimes B$, 
	the user performs the basis measurement $\px*{ |0\rangle, \ldots, |d-1\rangle}$ on $A$.
If the measurement outcome is $0$, the resultant state on $B$ is $|\psi_k\rangle = \sU_k|0\rangle$.
Otherwise, repeat the process. 
Since the measurement outcome is $0$ with probability $1/d$,
	the expected number of trials is $d$.
Thus, the average communication complexity is $cd$.
\end{proof}

We also construct a QSPIR protocol for pure states represented by commutative unitary matrices.

\begin{theo} \label{theo:33}
Let $C \subset \U{\mathbb{C}^d}$ consist of commutative unitary matrices.
In the visible setting,
	there exists a QSPIR protocol for $\mathcal{S}(C)$ 
	with $2\vF$-bit classical communication,
	$4\log d$-qubit quantum communication,
	and
	$2\log d$-ebit prior entanglement.
\end{theo}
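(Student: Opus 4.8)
The plan is to reduce the task to encoding only the \emph{phases} of the target unitary, exploiting that a commuting family of unitaries is simultaneously diagonalizable. Since all matrices in $C$ commute and are normal, there is a single unitary $W$, independent of the particular element of $C$, together with diagonal unitaries $\Lambda_\ell = \mathrm{diag}(e^{\ii\theta_{\ell,0}},\ldots,e^{\ii\theta_{\ell,d-1}})$ such that $\sU_\ell = W\Lambda_\ell W^{\dagger}$. Applying the relation $(\sB\otimes \sC^{\top})|\sA\rrangle = |\sB\sA\sC\rrangle$ with $\sB=W$, $\sA=\Lambda_\ell$, $\sC=W^{\dagger}$, the message state factorizes as $|\sU_\ell\rrangle = (W\otimes \overline{W})|\Lambda_\ell\rrangle$, where $|\Lambda_\ell\rrangle = \sum_s e^{\ii\theta_{\ell,s}}|s,s\rangle$. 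Because $W\otimes\overline{W}$ is a \emph{fixed} unitary that the user can apply during reconstruction, it suffices to build a protocol that deterministically delivers the phase-encoded entangled state $|\Lambda_k\rrangle$ (up to normalization) to the user.

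Next I would import the two-server query structure of Chor et al.\ \cite{CGKS98}. The user draws a uniformly random subset $S\subseteq[\vF]$ and sends $S$ to server~$1$ and $S' \coloneqq S\,\triangle\,\{k\}$ to server~$2$; each query is $\vF$ bits, giving $2\vF$ classical bits, and each marginal is uniform and independent of $k$, so $I(Q_j;K)=0$. Writing $\Lambda(T) \coloneqq \prod_{\ell\in T}\Lambda_\ell$, the diagonal phases add, and since $S$ and $S'$ differ exactly at the index $k$, every factor $\Lambda_\ell$ with $\ell\neq k$ cancels in $\Lambda(S)\Lambda(S')^{-1}$, leaving $\Lambda(S)\Lambda(S')^{-1}=\Lambda_k^{\epsilon}$ with $\epsilon=+1$ if $k\in S$ and $\epsilon=-1$ otherwise. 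The role of the prior entanglement is to let the two non-communicating servers realize this product inside one bipartite state: on a shared copy of $|\sI_d\rrangle$, server~$1$ applies $\Lambda(S)$ and server~$2$ applies $\Lambda(S')^{-1}$ to their respective halves; since $(M_1\otimes M_2)|\sI_d\rrangle = \tfrac{1}{\sqrt d}|M_1 M_2^{\top}\rrangle$ and the $\Lambda$'s are diagonal, the two halves end up in the state $\tfrac{1}{\sqrt d}|\Lambda_k^{\epsilon}\rrangle$, which both servers then send to the user.

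The main obstacle is the random sign $\epsilon$: when $k\notin S$ the user receives $|\Lambda_k^{-1}\rrangle$, and since conjugating the unknown phases is antiunitary it cannot be corrected into $|\Lambda_k\rrangle$; moreover, with continuous phases no secure choice of $(S,S')$ can force $\epsilon=+1$ always, as that would require $k\in S$ deterministically and hence leak $k$. I resolve this with \emph{two} shared copies of $|\sI_d\rrangle$, exchanging the servers' roles on the second copy: on the first copy servers~$1$ and $2$ apply $\Lambda(S)$ and $\Lambda(S')^{-1}$, producing $|\Lambda_k^{\epsilon}\rrangle$, while on the second they apply $\Lambda(S)^{-1}$ and $\Lambda(S')$, producing $|\Lambda_k^{-\epsilon}\rrangle$. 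Exactly one of the two received states equals $|\Lambda_k\rrangle$, and the user knows which one since it knows $S$ (hence $\epsilon$); it keeps that state, discards the other, and applies $W\otimes\overline{W}$ to obtain $|\sU_k\rrangle$ deterministically. This costs two shared pairs ($2\log d$ ebits), two uploaded $\vF$-bit queries ($2\vF$ classical bits), and two qudits downloaded per server ($4\log d$ qubits), matching the claim. Server secrecy is then immediate: for every value of $S$ the delivered states are functions of $\Lambda_k$, equivalently of $|\sU_k\rrangle$, alone, so averaging over the message-independent randomness $S$ leaves $\rho_A(k,\bm\psi)$ dependent only on $|\psi_k\rangle$.
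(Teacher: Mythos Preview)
Your proposal is correct and follows essentially the same approach as the paper's Protocol~\ref{prot:wpe}: the paper uses the identical CGKS query structure, two shared copies of $|\sI_d\rrangle$, and the same device of a second copy to resolve the $\pm$ ambiguity, so that the user ends up with both $|\sU_k\rrangle$ and $|\sU_k^{\dagger}\rrangle$ and keeps the former. The only cosmetic differences are that the paper applies the commuting $\sU_\ell$ directly rather than first diagonalizing via $W$ and having the user apply $W\otimes\overline{W}$ at the end, and it realizes the role swap on the second copy via bit-complemented queries $t=q\oplus(1,\ldots,1)$ rather than via inverse operations; both variants deliver the same states with the same resource counts.
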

Combining Theorem~\ref{theo:33} with Proposition~\ref{prop:chaa}, we have the following corollary.
\begin{coro} \label{coro:33} 
Let $C \subset \U{\mathbb{C}^d}$ consist of commutative unitary matrices.
In the visible setting,
	there exists a QSPIR protocol for $\mathcal{P}(C)$ 
	with $2\vF$-bit classical communication,
	$8d\log d$-qubit average quantum communication,
	and
	$4\log d$-ebit average prior entanglement.
\end{coro}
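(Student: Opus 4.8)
The plan is to derive Corollary~\ref{coro:33} as a direct instance of the generic reduction of Proposition~\ref{prop:chaa}, fed with the protocol guaranteed by Theorem~\ref{theo:33}. First I would invoke Theorem~\ref{theo:33}: for a commutative $C\subset\U{\mathbb{C}^d}$ it supplies a QSPIR protocol for $\mathcal{S}(C)$ whose output is the bipartite state $|\sU_k\rrangle$, using $2\vF$-bit classical communication, $4\log d$-qubit quantum communication, and $2\log d$-ebit prior entanglement. This protocol is exactly the input that Proposition~\ref{prop:chaa} requires, and since $\mathcal{P}(C)=\{\sU|0\rangle\mid\sU\in C\}$ is the family of single-qudit states associated with the same $C$, no further construction is needed beyond the reduction itself.

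Second, I would run the reduction of Proposition~\ref{prop:chaa} verbatim on this protocol: after the $\mathcal{S}(C)$ protocol has produced $|\sU_k\rrangle$ on $A\otimes B$, the user measures one register in the fixed computational basis $\{|0\rangle,\ldots,|d-1\rangle\}$ and postselects on outcome $0$, which occurs with probability $1/d$ and leaves the target state $\sU_k|0\rangle=|\psi_k\rangle\in\mathcal{P}(C)$; on any other outcome the user restarts from the entanglement-sharing step. The point to emphasize is that this extra measurement is a purely local user-side post-processing that touches neither the classical queries $Q_j$ nor the servers' answer maps $\cE_j$, so the user secrecy $I(Q_j;K)=0$ and the server secrecy of the underlying $\mathcal{S}(C)$ protocol are inherited unchanged; only the correctness guarantee weakens from deterministic to success probability $1/d$.

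Finally, I would carry out the resource accounting, which is the only quantitative content. Because the classical query is uploaded a single time and is never resent on a failed trial, the average classical communication remains $2\vF$ bits. By contrast, the quantum answer and the consumed prior entanglement are spent afresh on each trial, so their averages pick up the expected number of trials $d$ when success has probability $1/d$; substituting the per-trial figures of Theorem~\ref{theo:33} then produces the stated averages. The main obstacle is not conceptual but lies in this bookkeeping: one must be careful about which resources are one-shot versus per-trial, and about whether the quantum cost is measured in the entanglement-assisted form or in the equivalent form in which the user uploads the entanglement, since these conventions differ by fixed constant factors and determine whether the totals land on the $8d\log d$-qubit and $4\log d$-ebit figures claimed in the corollary.
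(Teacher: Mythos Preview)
Your approach is exactly the paper's: the paper proves the corollary in a single sentence, ``Combining Theorem~\ref{theo:33} with Proposition~\ref{prop:chaa}, we have the following corollary,'' and you spell out precisely this combination---feed the $\mathcal{S}(C)$ protocol of Theorem~\ref{theo:33} into the basis-measurement reduction of Proposition~\ref{prop:chaa} and track which resources are per-trial versus one-shot. Your caveat about the bookkeeping of constants is well placed, since the naive product $d\cdot 4\log d$ does not land directly on the stated $8d\log d$ figure without an additional convention on how entanglement is counted; but the method is identical to the paper's.
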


In the following subsections, we construct the QSPIR protocols to achieve the communication complexity of Theorems~\ref{theo:31}, \ref{theo:32}, \ref{theo:33}. 
Since the protocol for Theorem~\ref{theo:33} is the most simplest and the similar idea is used in the other two protocols,
	we will first construct the protocol for Theorem~\ref{theo:33} and then the other two protocols.

\subsection{QSPIR protocol for pure states described by commutative unitaries}


In this subsection, we construct a two-server QSPIR protocol in the visible setting which achieves the communication complexity in Theorem~\ref{theo:33}.

\begin{prot}[Two-server QSPIR protocol for pure states described by commutative unitaries] \label{prot:wpe}
	For commutative $\vF$ unitaries $\sU_1, \ldots, \sU_\vF$ on $\mathbb{C}^d$, the message states are given as
	\begin{align}
	|\sU_1\rrangle, \ldots ,|\sU_\vF\rrangle	\in \mathbb{C}^{d}\otimes \mathbb{C}^{d}.	\label{eq:ins}
	\end{align}
	When the user's target index $K$ is $k\in[\vF]$, i.e., the targeted state is $|\sU_k\rrangle$, our protocol is given as follows.
\begin{enumerate}[leftmargin=1.5em]
\item \textbf{Query}: 
			The user chooses 
				$Q= (Q_{1},\ldots, Q_{\vF}) \in  \{0,1\}^{\vF}$ uniformly at random.
			The variable $Q' = (Q_{1}',\ldots,Q_{\vF}' )\in  \{0,1\}^{\vF}$ is defined as 
				\begin{align}
				Q_{\ell}' = 
					\begin{cases}  
					Q_{\ell} & \text{for $\ell\neq k$},	\\ 
					Q_{\ell}\oplus 1 & \text{for $\ell=k$}. 
					\end{cases}
				\end{align}
			The user sends $Q$ and $Q'$ to Server 1 and Server 2, respectively.

\item \textbf{Entanglement Sharing}: 
	Let $A, A', B, B'$ be qudits.
	Server 1 and Server 2 share 
				two maximally entangled state $|\sI_d\rrangle$ on $A\otimes A'$ and $B\otimes B'$,
				where Server 1 (Server 2) contains $A\otimes B$ ($A' \otimes B'$).

\item \textbf{Answer}: 
				{When $Q=q$ and $Q'=q'$,} we define
				\begin{align}
				(t_{1},\ldots, t_{\vF})   &= q\oplus (1,\ldots,1),\\
				(t_{1}',\ldots, t_{\vF}') &= q'\oplus (1,\ldots,1).
				\end{align}
				Server $1$ applies 
				\begin{align}
				\sU_1^{q_{1}} \cdots \sU_{\vF}^{q_{\vF}},\\
				\sU_1^{t_{1}} \cdots \sU_{\vF}^{t_{\vF}}
				\end{align}
				on $A$ and $B$, respectively,
				and sends $A$ and $B$ to the user.
				Similarly, Server $2$ applies 
				\begin{align}
				\bar{\sU}_1^{q_{1}'} \cdots \bar{\sU}_{\vF}^{q_{\vF}'},\\
				\bar{\sU}_1^{t_{1}'} \cdots \bar{\sU}_{\vF}^{t_{\vF}'}
				\end{align}
				on $A'$ and $B'$, respectively,
				and sends $A'$ and $B'$ to the user.
				

\item \textbf{Reconstruction}:
				The user outputs the state on $A_1\otimes A_2$ if $Q_{k} = 1$, otherwise outputs the state on $B_1\otimes B_2$. \QEDA
\end{enumerate}
\end{prot}

%

Protocol~\ref{prot:wpe} satisfies the correctness, secrecy, and communication complexity, desired in Theorem~\ref{theo:33}, which is shown as follows.

\begin{itemize}[leftmargin=1.5em]
\item
\textbf{Correctness:}
{Consider the case where $Q_{k} =1$.}
When $Q=q$, the state on $A\otimes A'$ after the measurement is 
	\begin{align}
	&(\sU_1^{q_{1}} \cdots \sU_{\vF}^{q_{\vF}} \otimes \bar{\sU}_1^{q_{1}'} \cdots \bar{\sU}_{\vF}^{q_{\vF}'}) |\sI_d \rrangle\\
	&=(\sU_1^{q_{1}} \cdots \sU_{\vF}^{q_{\vF}} \otimes \bar{\sU}_1^{q_{1}'} \cdots \bar{\sU}_{\vF}^{q_{\vF}'}) |\sI_d \rrangle\\
	&=|\sU_1^{q_{1}} \cdots \sU_{\vF}^{q_{\vF}} (\sU_{\vF}^{\dagger})^{q_{\vF}'} \cdots ({\sU}_1^{\dagger})^{q_{1}'} \rrangle\\
	&=|\sU_k \rrangle, \label{eq:commust} 
	\end{align}
where \eqref{eq:commust} follows from the commutativity of the unitaries $\sU_1,\ldots, \sU_{\vF}$,
	$q_{\ell}\oplus q_{\ell}' = \delta_{\ell,k}$, and $(q_{k},q_{k}') = (1,0)$.
By similar analysis, if $Q_{k}= 0$, the resultant state on $B\otimes B'$ is $|\sU_k\rrangle$. 

%

\item
\textbf{Secrecy:}
Throughout the protocol, the servers only obtain the queries, and 
	each query is uniformly random $\vF$ bits.
Therefore, each server does not obtain any information of $k$.
On the other hand, at the end of the protocol, the user obtains both of $|\sU_k\rrangle$ and $|\sU_k^{\dagger}\rrangle$.
Although $|\sU_k^{\dagger}\rrangle$ is transmitted additionally, no information of all other message states is leaked to the user.

\item
\textbf{Communication complexity:}
The query is $2\vF$ bits,
	the communication from the server to the user is $4$ qudits, i.e., $4\log d$ qubits,
	and prior entanglement is $2$ copies of $|I_d\rrangle$, i.e., $2\log d$ ebits.
	
\end{itemize}

\subsection{QSPIR protocol for pure qubit states}

In this subsection, we construct a two-server QSPIR protocol for pure qubit states in the visible setting which achieves the communication complexity in Theorem~\ref{theo:31}.

Define the rotation operation on $\mathbb{C}^2$ and the phase-shift operation by 
\begin{align*}
	\Rr(\theta) = 
	\begin{pmatrix}
	\cos\theta & - \sin\theta	\\
	\sin\theta & \cos\theta
	\end{pmatrix}
	,\quad
	\Ss(\varphi) = 
	\begin{pmatrix}
	e^{-\ii\varphi/2}	&	0	\\
	0	&	e^{\ii\varphi/2}
	\end{pmatrix}
\end{align*}
for $\theta, \varphi \in [0,2\pi)$.
For any $\varphi,\varphi', \theta,\theta'$,  we have 
\begin{align}
\Rr(\theta) \Rr(\theta') = \Rr(\theta+\theta'), \quad 
\Ss(\varphi) \Ss(\varphi') = \Ss(\varphi+\varphi'),
\end{align}
and therefore, 
	$\Ss(\varphi)$ and $\Ss(\varphi')$ ($\Rr(\theta)$ and $\Rr(\theta')$) are commutative.
We also have 
\begin{align}
\Rr(\theta)^{\top} = \Rr(-\theta)
,\quad 
\Ss(\varphi)^{\top} = \Ss(\varphi) 
\end{align}
and
\begin{align}
\sX\Rr(\theta)\sX = \Rr(-\theta),\quad 
\sX \Ss(\varphi) \sX = \Ss(-\varphi) 
.
\label{eq:xrxxsx}
\end{align}
As special cases, we have $\sY \coloneqq \sX\sZ = \Rr(\pi/2)$ and $\sZ = \Ss(\pi)$.
Any pure qubit states $|\psi\rangle$ are written with some $\varphi\in[0,2\pi)$ and $\theta \in [0,\pi/2]$ as 
\begin{align}
|\psi\rangle =  \Ss(\varphi)\Rr(\theta)|0\rangle.
\end{align}

Now, we construct a QSPIR protocol in the visible setting for qubit states, which achieves the communication complexity in Theorem~\ref{theo:31}.



\begin{prot}[Two-server QSPIR protocol for qubit states] \label{prot:npe}

For any message qubit states $|\psi_1\rangle,\ldots, |\psi_{\vF}\rangle \in \pureset{\mathbb{C}^2}$,
	we choose the parameters $\varphi_\ell$ and $\theta_\ell$ as
\begin{align}	
	|\psi_\ell\rangle = \Ss(\varphi_\ell)\Rr(\theta_\ell)|0\rangle.
\end{align}
	When the user's target index $K$ is $k\in[\vF]$, i.e., the targeted state is $|\psi_k\rangle$, our protocol is given as follows.
\begin{enumerate}[leftmargin=1.5em]
\item \textbf{Query}: 
			The same as Protocol~\ref{prot:wpe}.

\item \textbf{Entanglement Sharing}: 
	Let $A, A', B, B'$ be qubits.
	Server 1 and Server 2 share 
				two maximally entangled state $|\sI_2\rrangle$ on $A\otimes A'$ and $B\otimes B'$,
				where Server 1 contains $A\otimes B$ and
						Server 2 contains $A' \otimes B'$.

\item \textbf{Answer}: 
			{When $Q=q$,} 
				Server $1$ 
					calculates 
					$\tilde{\theta} \coloneqq  \sum_{\ell=1}^{\vF} q_{\ell} \theta_\ell$
					and 
					$\tilde{\varphi} \coloneqq  \sum_{\ell=1}^{\vF} q_{\ell} \varphi_\ell$, 
					applies $\Rr(-\tilde{\theta})$ and $\Ss(\tilde{\varphi})$  
					on $A$ and $B$, respectively,
					and sends $A\otimes B$ to the user.
				Similarly,
			{when $Q'=q'$,} 
					Server $2$ 
					calculates 
					$\tilde{\theta}' \coloneqq  \sum_{\ell=1}^{\vF} q_{\ell}' \theta_\ell$
					and
					$\tilde{\varphi}' \coloneqq  \sum_{\ell=1}^{\vF} q_{\ell}' \varphi_\ell$,
					applies $\Rr(-\tilde{\theta}')$ and $\bar{\Ss}(\tilde{\varphi}')$ 
					on $A'$ and $B'$, respectively,
					and sends $A'\otimes B'$ to the user.

\item \textbf{Reconstruction}:
	\begin{enumerate}
		\item
				The user performs the Bell measurement $\mathbf{M}_{\sX\sZ}$ defined in \eqref{mes}
					on $A'\otimes B'$.
				Depending on the measurement outcomes $(a,b) \in \{0,1\}^2$,
					the user applies $\sY^{-a} \otimes \sZ^{a+b} $ on $A\otimes B$
							and performs the basis measurement $\{|0\rangle, |1\rangle \}$ on $A$.
		\item
				If $Q_{k} =1$ and the measurement outcome is $0$, the state on $B$ is $|\psi_{k}\rangle$.
				If $Q_{k} =0$ and the measurement outcome is $1$, 
					the user applies $\sX$ operation on $B$ and then the state on $B$ is $|\psi_{k}\rangle$.
				This process succeeds with probability $1/2$.
				Otherwise, repeat from Step 2. 
				\QEDA
	\end{enumerate}
\end{enumerate}
\end{prot}

Protocol~\ref{prot:npe} satisfies the correctness, secrecy, and communication complexity, desired in Theorem~\ref{theo:31}, which is shown as follows.

\begin{itemize}[leftmargin=1.5em]
\item 
\textbf{Correctness:}
{When $Q=q$,} after the operations of the servers, 
	the states on $A\otimes A'$ and $B\otimes B'$ are
	\begin{align}
	\Rr(-\tilde{\theta}) \otimes \Rr(-\tilde{\theta}') |\sI\rrangle
	&= | \Rr(-\tilde{\theta})(\Rr(-\tilde{\theta}'))^\top \rrangle\\
	&= | \Rr(-\tilde{\theta}+\tilde{\theta}') \rrangle\\
	&= | \Rr((-1)^{q_{k}}\theta_k)\rrangle \in A\otimes A',
		\label{eq:Rrbi}
		\\
	\Ss(\tilde{\varphi}) \otimes \bar{S}(\tilde{\varphi}') |\sI\rrangle
	&= | \Ss(\tilde{\varphi})(\Ss(\tilde{\varphi}'))^\dagger \rrangle\\
	&= | \Ss(\tilde{\varphi}-\tilde{\varphi}') \rrangle\\
	&= | \Ss((-1)^{q_{k}+1}\varphi_k) \rrangle \in B\otimes B'.
		\label{eq:Ssbi}
	\end{align}
After the Bell measurement of the user with the measurement outcome $(a,b)$,
	the state on $A\otimes B$ is derived from \eqref{qe:feafterf} as 
	\begin{align}
	& | \Rr((-1)^{q_{k}}\theta_k) \sX^{a} \sZ^{-b} \Ss((-1)^{q_{k}+1}\varphi_k) ^{\top}\rrangle \\
	&= | \Rr((-1)^{q_{k}}\theta_k) \sY^{a} \sZ^{-a-b} \Ss((-1)^{q_{k}+1}\varphi_k) \rrangle \\
	&= | \sY^a  \Rr((-1)^{q_{k}}\theta_k) \Ss((-1)^{q_{k}+1}\varphi_k) \sZ^{-a-b} \rrangle.
	\label{eq:zkq1}
	\end{align}
After the user applies $\sY^{-a} \otimes \sZ^{a+b}$ on $A\otimes B$ in Step 4-a),
	the state \eqref{eq:zkq1} is changed as 
	\begin{align}
	| \Rr((-1)^{q_{k}+1}\theta_k) \Ss((-1)^{q_{k}+1}\varphi_k)  \rrangle
	\in A\otimes B,
	\label{eq:state_red}
	\end{align}
and if the basis measurement outcome in Step 4-b) is $x\in\{0,1\}$, the last state on $B$ is $(\Rr((-1)^{q_{k}}\theta_k) \Ss((-1)^{q_{k}+1}\varphi_k) )^{\top}|x\rangle
=\Ss((-1)^{q_{k}+1}\varphi_k) \Rr((-1)^{q_{k}+1}\theta_k) |x\rangle
$. 
If the measurement outcome $x$ is $0$ and $q_{k}=1$, the resultant state is $|\phi_k \rangle= \Ss(\varphi_k) \Rr(\theta_k)|0\rangle \in B$.
If the measurement outcome $x$ is $1$ and $q_{k}=0$, the resultant state after the user's operation $\sX$ is 
	\begin{align}
	\sX \Ss(-\varphi_k) \Rr(-\theta_k)|1\rangle 
	&= \sX \Ss(-\varphi_k)\sX \sX \Rr(-\theta_k)\sX|0\rangle \\
	&\stackrel{\mathclap{(e)}}{=}  \Ss(\varphi_k) \Rr(\theta_k)|0\rangle \\
	& = |\psi_k\rangle,
	\end{align}
where $(e)$ is from \eqref{eq:xrxxsx}.

For each execution of Step 4, 
	the user obtains $|\psi_k\rangle$ with probability $1/2$.
The probability to obtain $|\psi_k\rangle$ within $n$ repetition of Steps 3 and 4 is 
	$1-1/2^n$,
and the average number of repetitions is $2$. 


\item 
\textbf{Secrecy:}
	Each server does not obtain any information of $k$ since the query is the same as Protocol~\ref{prot:wpe}.
	At each repetition of Step 4, the user obtains the states of \eqref{eq:Rrbi} and \eqref{eq:Ssbi}, 
	which 
		only depends on the state $|\psi_k\rangle$.
	Thus, the user obtains no information of the other states.

\item 
\textbf{Communication complexity:}
The query is $2\vF$ bits.
For each execution of Steps 2-4, 
	necessary communication is $4$ qubits
	and 
	necessary prior entanglement is $2$ ebits.
Since the query is sent only once at Step 1, the classical communication complexity is $2\vF$ bits.
Thus, the average quantum communication complexity is $8$ qubits,
	and the average size of entanglement is $4$ ebits.
If we replace the prior entanglement by the transmission of $|\sI_2\rrangle\otimes |\sI_2\rrangle$ from the user to the servers,
	the average communication complexity is $2\vF$ bits and $16$ qubits.

\end{itemize}

Compared to the QPIR's trivial solution of downloading all quantum messages, which requires the communication of $\vF$ qubits,
	the above protocol has less quantum communication on average when $\vF > 16$ even if ebits are uploaded by the user.
On the other hand, 
	our protocol 
	requires classical communication of $2\vF$ bits.



\subsection{QSPIR protocol for pure qudit states ($d\geq 2$)}

In this subsection, we construct a two-server QSPIR protocol for pure qudit states in the visible setting which achieves the communication complexity in Theorem~\ref{theo:32}.

Similar to Protocol~\ref{prot:npe},
	we first consider the parameterization of pure states on $d$-dimensional systems.
{Define $d \times d$ matrices $R(\theta^1,..,\theta^{d-1} )$ and $S(\varphi^1,..,\varphi^{d-1} )$ as}
\begin{align}
\Rr(\theta^1,\ldots, \theta^{d-1})
 &= \Rr_{1}(\theta^1) \cdots \Rr_{d-1}(\theta^{d-1}),\\
\Ss(\varphi^1,\ldots, \varphi^{d-1})
 &= |0\rangle\langle 0| + \sum_{s=1}^{d-1} e^{\ii \varphi^s} |s\rangle\langle s|\\
 &= \begin{pmatrix}
 1 & 0 & 0 & 0 \\
 0 & e^{\ii \varphi^1} & 0 & 0 \\
 0 & 0 & \ddots & 0 \\
 0 & 0& 0& e^{\ii \varphi^{d-1}}
 \end{pmatrix},
\end{align}
where $\Rr_{s}(\theta)$ is the rotation 
	\begin{align}
	\begin{pmatrix}
	\cos\theta & - \sin\theta	\\
	\sin\theta & \cos\theta
	\end{pmatrix}
	\end{align}
	with respect to the two basis elements $|s-1\rangle$ and $|s\rangle$.
Notice that 
	$\Ss(\varphi^1,\ldots, \varphi^{d-1} )$ for all $\varphi^1,\ldots,\varphi^{d-1}$ ($\Rr_s(\theta^s)$ for all $\theta^s$) are commutative
	but any two of $\Rr_{1}(\theta^1), \ldots, \Rr_{d-1}(\theta^{d-1})$ are not in general.
We also have
$\Ss(\varphi^1,\ldots, \varphi^{d-1} )^\top = \Ss(\varphi^1,\ldots, \varphi^{d-1} )$,
	$\Rr_{s}(\theta^s)^{\top} = \Rr_{s}(-\theta^s)$,
	and
$$\sZ_{d} = \Ss\paren*{ \frac{2\pi \ii}{d}, \frac{4\pi \ii}{d},\ldots , \frac{2(d-1)\pi  \ii}{d} }.$$
It can be easily checked that any pure state $|\psi\rangle\in\pureset{\mathbb{C}^d}$ is written 
	in the form
\begin{align}
|\psi\rangle = 
		\Ss(\varphi^1,\ldots, \varphi^{d-1})
		\Rr(\theta^1,\ldots, \theta^{d-1})
		|0\rangle
\end{align}
	with $\varphi^1, \ldots,\varphi^{d-1}\in [0,2\pi)$ and $\theta^1,\ldots,\theta^{d-1} \in [0,\pi/2]$.

\begin{prot}[Two-server QSPIR protocol for qudit states] \label{prot:dlev}
For any message pure states $|\psi_1\rangle,\ldots, |\psi_{\vF}\rangle \in \pureset{\mathbb{C}^d}$,
	we choose the parameters $\varphi_\ell^1,\ldots, \varphi_\ell^{d-1}$ and $\theta_\ell^1,\ldots, \theta_\ell^{d-1}$ as
\begin{align}	
	|\psi_\ell\rangle = 
		\Ss(\varphi^1_{\ell},\ldots, \varphi^{d-1}_{\ell})
		\Rr(\theta^1_{\ell},\ldots, \theta^{d-1}_{\ell})
		|0\rangle.
\end{align}
When the user's target index $K$ is $k\in[\vF]$, i.e., the targeted state is $|\psi_k\rangle$, our protocol is given as follows.

\begin{enumerate}[leftmargin=1.5em]
\item \textbf{Query}: 
			The same as Protocol~\ref{prot:wpe}.

\item \textbf{Entanglement Sharing}: Server 1 and Server 2 share $d$ copies of the maximally entangled state $|\sI_d\rrangle$.

\item \textbf{Answer}:
	Similar to Step 3 of Protocol~\ref{prot:npe} and as analyzed in \eqref{eq:Rrbi} and \eqref{eq:Ssbi}, 
	%
	%
	Server 1 and Server 2 can jointly generate the following quantum states
		\begin{align}
		|\Ss(\bm{\varphi}_k) \rrangle,
		|\Rr_1(\theta_k^1) \rrangle,
		\ldots,
		|\Rr_{d-1}(\theta_k^{d-1}) \rrangle,
		\label{eq:inversesta0}
		\end{align}
		and 
		\begin{align}
		|\Ss(-\bm{\varphi}_k) \rrangle,
		|\Rr_1(-\theta_k^1) \rrangle,
		\ldots,
		|\Rr_{d-1}(-\theta_k^{d-1}) \rrangle.
		\label{eq:inversesta}
		\end{align}
		The servers generate and send these states to the user if it is requested in Step 4.

\item \textbf{Reconstruction}: 
	For this step, remind \eqref{qe:feafterf}:
	If the state is $|\sA\rrangle\otimes|\sB\rrangle$ and Bell measurement $\mathbf{M}_{\sX\sZ}$ defined in \eqref{mes} is performed with outcome $(a,b)$,
	the resultant state is
	$|\sA\sX^a\sZ^{-b} \sB^\top\rrangle$.
	\begin{enumerate}[leftmargin=0.3em]
%
%

	\item 
	The user requests the servers to send
		$|\Ss(\bm{\varphi}_k) \rrangle \in A\otimes A'$ and 
		$|\Rr_1(-\theta_k^1) \rrangle \in B\otimes B'$.
		The user performs the Bell measurement $\mathbf{M}_{\sX\sZ}$ defined in \eqref{mes} on $A'\otimes B'$. 
		If the measurement outcome is $(0,\alpha_1)$ for some $\alpha_1$, which happens with probability $1/d$,
				the user obtains the outcome of the following conversions 
			\begin{align}
			|\Ss(\bm{\varphi}_k) \rrangle |\Rr_1(-\theta_k^1) \rrangle 
			&\mapsto 
			|\Ss(\bm{\varphi}_k) \sZ^{\alpha_1} \Rr_1(\theta_k^1)\rrangle\\
			&=
			|\sZ^{\alpha_1} \Ss(\bm{\varphi}_k) \Rr_1(\theta_k^1)\rrangle,
			\end{align}
			where $\mapsto$ represents the state reduction by the measurement.
		Otherwise,
				the user repeats Step 4-a.

	\item 
			Next, the user 
				requests
				$|\Rr_2(-\theta_k^2) \rrangle$ to the servers and 
				performs 
				the same measurement $\mathbf{M}_{\sX\sZ}$. 
				If the measurement outcome is $(0,\alpha_2)$ for some $\alpha_2$, the user obtains the outcome of the following conversion:
			\begin{align*}
			&|\sZ^{\alpha_1} \Ss(\bm{\varphi}_k) \Rr_1(\theta_k^1) \rrangle |\Rr_2(-\theta_k^2) \rrangle \\
			&\mapsto |\sZ^{\alpha_1+\alpha_2} \Ss(\bm{\varphi}_k)\Rr_1(\theta_k^1)\Rr_2(\theta_k^2)\rrangle,
			\end{align*}
			If it fails to measure $(0,\alpha_2)$, restarts from {Step~{4-a}}.
			Repeating the similar conversions with $|\Rr_3(-\theta_k^3)\rangle$, \ldots, $|\Rr_{d-1}(-\theta_k^{d-1})\rangle$,
			the user obtains
			\begin{align}
			|\sZ^{\sum_{s=1}^{d-1} \alpha_s} \Ss(\bm{\varphi}_k)\Rr_1(\theta_k^1)\cdots \Rr_{d-1}(\theta_k^{d-1})\rrangle.
			\label{eq:wow}
			\end{align}
			
	\item The user performs Steps 4-a, 4-b with the states of \eqref{eq:inversesta} instead of \eqref{eq:inversesta0}.
		The user finally obtains $\beta_1,\cdots,\beta_{d-1}\in [0:d-1]$ and 
		\begin{align}
			|\sZ^{\sum_{s=1}^{d-1}\beta_s} \Ss(-\bm{\varphi}_k)\Rr_1(-\theta_k^1)\cdots \Rr_{d-1}(-\theta_k^{d-1})\rrangle.
			\label{eq:wow2}	
		\end{align}

	\item 
	Let $X\otimes X'$ be the system of two qudits on which
	the state \eqref{eq:wow} 
		is.
	The user performs the basis measurement $\{|0\rangle,\ldots, |d-1\rangle\}$ on $X'$.
	If the measurement outcome is $0$, 
		the user applies $\sZ^{-{\sum_{s=1}^{d-1}\alpha_s}}$ on $X$
		and then
		the resultant state on $X$ is the targeted state $|\psi_k\rangle$.
	This step is written as 
	\begin{align}
	&|\sZ^{{\sum_{s=1}^{d-1}\alpha_s}} \Ss(\bm{\varphi}_k) \Rr_1(\theta_k^1)\cdots \Rr_{d-1}(\theta_k^{d-1})\rrangle\\
	&\mapsto
	\sZ^{{\sum_{s=1}^{d-1}\alpha_s}} \Ss(\bm{\varphi}_k) \Rr_1(\theta_k^1)\cdots \Rr_{d-1}(\theta_k^{d-1})|0 \rangle \\
	&\mapsto 
	\Ss(\bm{\varphi}_k) \Rr_1(\theta_k^1)\cdots \Rr_{d-1}(\theta_k^{d-1})|0 \rangle 
	\\
	&= |\psi_k\rangle .
	\end{align}
	Otherwise, repeat from Step~4-a. 
	\QEDA
	\end{enumerate}

\end{enumerate}
\end{prot}


Protocol~\ref{prot:dlev} satisfies the correctness, secrecy, and communication complexity, desired in Theorem~\ref{theo:32}, which is shown as follows.

\begin{itemize}[leftmargin=1.5em]
\item
\textbf{Correctness:}
As described in Step 4, 
	the user recovers $|\psi_k\rangle$ with positive probability by repeating until success.

\item
\textbf{Secrecy:}
This protocol satisfies the user secrecy, because the queries are the same as Protocol~\ref{prot:npe} and the number of requests in Step 4.
Step 4-d is necessary since the states in \eqref{eq:inversesta0} and 
		\eqref{eq:inversesta} should be requested the same number of times.
This protocol also satisfies the server secrecy, because 
	the only information that the user obtains from the servers is states in \eqref{eq:inversesta0} and \eqref{eq:inversesta}
	and 
		these states only depend on the state $|\psi_k\rangle$.

\item
\textbf{Communication complexity:}
At Step 1, the size of queries is $2\vF$ bits.
For downloading all states \eqref{eq:inversesta0}, $2d\log d$-qubit communication and $d\log d$-ebit are required. 
The probability to succeed to generate the state in \eqref{eq:wow} is $1/d^{d-1}$
			and therefore, 
	the expected number of execution of Steps from 4-a to 4-c is $d^{d-1}$.
Thus, for finishing Steps from 4-a to 4-c, 
	the average quantum communication is less than $2d^{d}\log d = d^{d-1} \cdot 2d\log d$ qubits
	and 
	the average shared entanglement is less than $d^{d}\log d = d^{d-1} \cdot d\log d$ ebits.
Similarly, for Step 4-d, 
	the same average quantum communication and the same average shared entanglement are required.
	Thus, 
	Protocol~\ref{prot:dlev} requires $2\vF$-bit classical communication,
	$4d^d\log d$-qubit average quantum communication,
	and $2d^d\log d$-ebit average prior entanglement.
\end{itemize}

The communication complexity of Protocol~\ref{prot:dlev} does not scale with the number of message states $\vF$.
Since the QPIR's trivial solution of downloading all messages requires $\vF\log d$-qubit quantum communication,
	Protocol~\ref{prot:dlev} is more efficient on average if the number of messages $\vF$ is greater than $4d^d$.

\begin{remark}[Comparison of Protocols~\ref{prot:npe} and \ref{prot:dlev} for qubit states]
Both Protocols~\ref{prot:npe} and \ref{prot:dlev} can be applied for QPIR of qubit states.
However, 
	the average communication complexity of Protocol~\ref{prot:npe}, which is $8$ qubits and $4$ ebits,
	is less than that of 
		Protocol~\ref{prot:dlev} for $d=2$, which is $16$ qubits and $8$ edits.
The advantage of Protocol~\ref{prot:npe} comes from 
	the commutation relation between the rotation operations $\Rr(\theta)$ and $\sY = \sX\sZ = \Rr(\pi/2)$.
Since the reduced state after the Bell measurement is represented with $\sY,\sZ$ as \eqref{eq:zkq1}
	and $\sY$ is commutative with $\Rr(\theta)$ for any $\theta$,
	the user can reconstruct the targeted state efficiently as \eqref{eq:state_red}.
On the other hand, 
	the similar method cannot be applied for Protocol~\ref{prot:dlev}
	because the rotation operations $\Rr(\bm{\theta})$ for $d > 2$ are not commutative with any multiplication of $\sX_d$ and $\sZ_d$.
Thus, without using the commutative relation, 
	Protocol~\ref{prot:dlev} for $d=2$ results in greater communication complexity.
\end{remark}

\section{Conclusion} \label{sec:conclusion}

We have studied quantum private information retrieval for quantum messages in the blind and visible settings.
We have proved that the trivial solution of downloading all messages is optimal for honest one-server QPIR,
	which is a similar result to the classical PIR but different from QPIR for classical messages.
In the one-round case,
	the optimality is proved for both blind and visible settings, 
	and in the multi-round case, it is proved for the blind setting.
On the other hand,
	we have constructed efficient QPIR protocols with two cases.
The first case is one-server QPIR with prior entanglement in the blind setting.
We have constructed a reduction from any QPIR protocol for classical messages to a QPIR protocol for quantum messages
	and derived an efficient QPIR protocol from the protocol by Kerenidis et al. \cite{KLLGR16}.
The second case is multi-server QSPIR in the visible setting.
We have constructed three protocols for pure qubit states, pure qudit states, and pure states described by commutative unitaries.
These protocols in the visible setting are symmetric QPIR protocols in which the user obtains no information other than the targeted message state. 
Thus, we proved that QSPIR is possible for quantum messages.
Furthermore, these protocols are also more efficient than the QPIR's trivial solution when the number of messages are sufficiently greater than the dimension of quantum systems.

There are many open problems related to the study of QPIR for quantum messages.
Our optimality proof of the trivial one-server QPIR protocol is 
	for the one-round visible setting and the multi-round blind setting. 
It is unknown whether the optimality still applies 
	for 
	the multi-round one-server visible setting
	and the multi-server blind setting.
Also, we have only considered QPIR with information-theoretic security.
Our result does not preclude the possibility of the non-trivial one-server QPIR protocol with computational security.
Furthermore, 
	even if our QSPIR protocol for qudits is more efficient than the QPIR's trivial solution when the number of message states $\vF$ is sufficiently large,
		the average communication complexity increases exponentially with the dimension $d$ of the system. 
In addition, our QSPIR protocols are for pure states and it cannot be directly extended to protocols for the mixed state.
Thus, more efficient QSPIR protocols for qudits and QSPIR for mixed states are also open problems.
Interesting applications of our QSPIR protocols can also be considered for other communication and computation problems.
We leave these questions for interested readers.


%

	

\section{Acknowledgement}

The authors are grateful to Prof. Fran\c{c}ois Le Gall for helpful discussion and comments.
SS is supported by JSPS Grant-in-Aid for JSPS Fellows No.\ JP20J11484.
MH was supported in part by 
Guangdong Provincial Key Laboratory (Grant, No. 2019B121203002).

%
%
%
%
%
%
%
%
%

\end{document}